\newtheorem{definition}[thm]{Definition} 
\newtheorem{theorem}[thm]{Theorem} 
\newtheorem{remark}[thm]{Remark} 
\begin{document}

\title[Feasible Interpolation for QBF Resolution Calculi]{Feasible Interpolation for QBF Resolution Calculi\rsuper*}
\author[O.~Beyersdorff]{Olaf Beyersdorff\rsuper a}
\address{{\lsuper{a}}School of Computing, University of Leeds, United Kingdom}
\address{{\lsuper{b}}The Institute of Mathematical Sciences (HBNI), Chennai, India}
\author[L.~Chew]{Leroy Chew\rsuper a}
\author[M.~Mahajan]{Meena Mahajan\rsuper b}
\author[A.~Shukla]{Anil Shukla\rsuper b}
\address{\vspace{-18pt}}

\begin{abstract}
  In sharp contrast to classical proof complexity we are currently short of lower bound techniques for QBF proof systems. In this paper we establish the feasible interpolation technique for all resolution-based QBF systems, whether modelling CDCL or expansion-based solving. This both provides the first general lower bound method for QBF proof systems as well as largely extends the scope of classical feasible interpolation. We apply our technique to obtain new exponential lower bounds to all resolution-based QBF systems for a new class of QBF formulas based on the clique problem. Finally, we show how feasible interpolation relates to the recently established lower bound method based on strategy extraction by Beyersdorff et al.\ \cite{BCJ15,BBC16}.
\end{abstract}

\ACMCCS{[{\bf Theory of computation}]: Proof complexity}



\keywords{Feasible interpolation, Proof complexity, QBF, Resolution}

\titlecomment{{\lsuper*}This work was  supported by the EU Marie Curie IRSES grant CORCON, 
grant no.\ 48138 from the John Templeton Foundation, 
EPSRC grant EP/L024233/1, and 
a Doctoral Training Grant from EPSRC (2nd author).
\newline
    A preliminary version of this article appeared in the proceedings
    of the conference ICALP'15 \cite{BCMS15}.
}

\maketitle

\section{Introduction}
\label{sec:intro}
The main aim in proof complexity is to understand the complexity of theorem proving. Arguably, what is even more important is to establish techniques for lower bounds, and the recent history of computational complexity speaks volumes on how difficult it is to develop general lower bound techniques. Understanding the size of proofs is important for at least two reasons. The first is its tight relation to the separation of complexity classes: NP vs.\ coNP for propositional proofs, and NP vs. PSPACE in the case of proof systems for quantified boolean formulas (QBF). New superpolynomial lower bounds for specific proof systems rule out specific classes of non-deterministic poly-time algorithms for problems in co-NP or PSPACE, thereby providing an orthogonal approach to the predominantly machine-oriented view of computational complexity. 

The second reason to study lower bounds for proofs is the analysis of
SAT and QBF solvers: powerful algorithms that efficiently solve the
classically hard problems of SAT and QBF for large classes of
practically relevant formulas. Modern SAT solvers routinely solve
industrial instances in even millions of variables for various
applications. Even though QBF solving is at a much earlier state, due
to its power to express problems more succinctly, QBF even applies to further fields such
as formal verification or
planning~\cite{Rin07,BM08a,EglyKLP14}. Each
successful run of a solver on an unsatisfiable instance can be
interpreted as a proof of unsatisfiability; and many modern SAT solvers
based on conflict-driven clause learning (CDCL) are known to
implicitly generate resolution proofs. 
Thus, understanding the complexity of resolution proofs helps obtain
worst-case  bounds for the performance of CDCL-based SAT solvers.

The picture is more complex for QBF solving, as there exist two main, yet conceptually very different paradigms: CDCL-based and expansion-based solving. A variety of QBF resolution systems have been designed to capture the power of QBF solvers based on these paradigms. The core system of these is Q-Resolution (\qrc), introduced by Kleine B\"{u}ning et al.\ \cite{DBLP:journals/iandc/BuningKF95}. This has been augmented to capture ideas from CDCL solving, leading to long-distance resolution (\lqrc) \cite{DBLP:journals/fmsd/BalabanovJ12}, universal resolution (\qurc) \cite{Gelder12}, or its combinations like \lquprc \cite{BWJ14}. 

Powerful proof systems for expansion-based solving were  developed in the form of \ecalculus \cite{JM15}, and the stronger \irc and \irmc \cite{BCJ14}. Recent findings show that CDCL and expansion are indeed orthogonal paradigms as the underlying proof systems from the two categories are incomparable with respect to simulations \cite{BCJ15}.

Understanding which general techniques can be used to show lower bounds for proof systems is of paramount importance in proof complexity. For propositional proof systems we have a number of very effective techniques, most notably the size-width technique  of Ben-Sasson and Wigderson \cite{BW01}, deriving size  from width bounds, game characterisations (e.g.\ \cite{Pud00,BK14}), the approach via proof-complexity generators (cf.\ \cite{Kra11}), and feasible interpolation. Feasible interpolation, first introduced by \Krajicek\ \cite{Kra97}, is a particularly successful paradigm that transfers circuit lower bounds to size of proof lower bounds. The technique has been shown to be effective for resolution \cite{Kra97}, cutting planes \cite{Pud97} and even strong Frege systems for modal and intuitionistic logics \cite{Hru09}. However, feasible interpolation fails for strong propositional systems as Frege systems under plausible cryptographic and number-theoretic assumptions \cite{KP98,BPR00,BDGMP04}.

The situation is drastically different for QBF proof systems, where we currently possess a very limited bag of techniques. 
In particular, the classical size-width technique of Ben-Sasson and Wigderson \cite{BW01} by which most resolution lower bounds are obtained drastically fails in \qrc \cite{BCMS16}.
At present we only have the recent strategy extraction technique  of Beyersdorff et al.\ \cite{BCJ15}, which works for \qrc as well as for stronger QBF Frege systems \cite{BBC16,BP16}, a game characterisation of the very weak tree-like \qrc \cite{BCS15}, and ad-hoc lower bound arguments for various systems \cite{BCJ15,DBLP:journals/iandc/BuningKF95}. In addition, Balabanov et al.\ \cite{BWJ14} develop methods to lift some previous lower bounds from \qrc to stronger systems.


We now proceed to explain the main contributions of the article.

\subsection*{1.\ A general lower bound technique.} We show that the feasible interpolation technique applies to all resolution-type QBF proof systems, whether expansion or CDCL based. This provides the first truly general lower bound technique for QBF proof systems, and---at the same time---hugely extends the scope of the feasible interpolation method. 
(We note that in recent work \cite{BCMS-FST16}, this technique has
also been shown to apply to a QBF version of the cutting planes proof system.)

In a nutshell, feasible interpolation works for true implications
$A(\vec{p},\vec{q}) \to B(\vec{p},\vec{r})$ (or, equivalently, false conjunctions $A(\vec{p},\vec{q}) \wedge \neg B(\vec{p},\vec{r})$), which by Craig's
interpolation theorem \cite{Cra57} possess interpolants $C(\vec{p})$
in the common variables $\vec{p}$. Such interpolants, even though they
exist, may not be of polynomial size \cite{Mun84}. However, it may be
the case that we can always efficiently extract such interpolants from
a proof of the implication in a particular proof system $P$, and in
this case, the system $P$ is said to admit feasible interpolation. If
we know that a particular class of formulas does not admit small
interpolants (either unconditional or under suitable assumptions),
then there cannot exist small proofs of the formulas in the system
$P$. Here we show that this feasible interpolation theorem holds for
arbitrarily quantified formulas $A(\vec{p},\vec{q})$ and
$B(\vec{p},\vec{r})$ above, when the common variables $\vec{p}$ are
existentially quantified before all other variables.

\subsection*{2.\ New lower bounds for QBF systems.} As our second
contribution we exhibit new hard formulas for QBF resolution
systems. Of course, exponential lower bounds for these
systems follow immediately  from the known lower bounds for resolution
(in these systems, refuting a totally quantified false sentence that
uses only existential quantifiers degenerates to classical
resolution). However, we can better understand the power of such systems to
handle arbitrary QBFs if we have more examples of false QBFs that use
existential and universal quantifiers in non-trivial ways and that are
hard to refute in these systems. 
It is fair to say that we are currently quite short of hard
examples: research so far has mainly concentrated on formulas of Kleine B\"{u}ning et al.\ \cite{DBLP:journals/iandc/BuningKF95} and their modifications \cite{BCJ15,BWJ14}, a principle by Janota and Marques-Silva \cite{JM15}, and a class of parity formulas recently introduced by Beyersdorff et al.\ \cite{BCJ15}. This again is in sharp contrast with classical proof complexity where a wealth of different combinatorial principles as well as  random formulas are known to be hard for resolution. 

Our new hard formulas are QBF contradictions formalising the easy and appealing fact that a graph cannot both have and not have a $k$-clique. The trick is that in our formulation, each interpolant for these formulas has to solve the $k$-clique problem. Using our interpolation theorem together with the exponential lower bound for the monotone circuit complexity of clique \cite{AB87}, we obtain exponential lower bounds for the clique-no-clique formulas in all CDCL and expansion-based QBF resolution systems. 

We remark that conceptually our clique-no-clique formulas are different from and indeed simpler than the clique-colour formulas used for the interpolation technique in classical proof systems. This is due to the more succinct expressivity of QBF. Indeed it is not clear how the clique-no-clique principle could even be formulated succinctly in propositional logic.

\subsection*{3.\ Comparison to strategy extraction.} On a conceptual level,
we uncover a tight  relationship between feasible interpolation and
strategy extraction. Strategy extraction is a very desirable property
of QBF proof systems and is known to hold for the main
resolution-based systems. From a refutation of a false QBF,
a winning strategy for the universal player can be efficiently extracted.

Like feasible interpolation, the lower bound technique based on
strategy extraction by Beyersdorff et al.\ \cite{BCJ15,BBC16} also transfers circuit lower bounds to proof size bounds. However, instead of monotone circuit bounds as in the case of feasible interpolation, the strategy extraction technique imports $\mathsf{AC}^0$ circuit lower bounds (or further circuit bounds for circuit classes directly corresponding to the lines in the proof system \cite{BBC16}). Here we show that each feasible interpolation problem can be transformed into a strategy extraction problem, where the interpolant corresponds to the winning strategy of the universal player on the first universal variable. This clarifies that indeed feasible interpolation can be viewed as a special case of strategy extraction.

\subsection*{Organisation of the paper}

The remaining part of this article is organised as follows. In Section~\ref{sec:prelim} we review the definitions and relations of relevant QBF proof systems. In Section~\ref{sec:interpolation} we start by recalling the overall idea for feasible interpolation and show  interpolation theorems for the strongest CDCL-based system \lquprc as well as the strongest expansion-based proof system \irmc. This implies feasible interpolation for all QBF resolution-based systems. Further we show that all these systems even admit monotone feasible interpolation. In Section~\ref{sec:lower-bounds} we obtain the new lower bounds for the clique-no-clique formulas. Section~\ref{sec:strat-extraction} reformulates interpolation as a strategy extraction problem.

\section{Preliminaries} 
\label{sec:prelim}

A literal is a boolean variable or its negation. We say a literal $x$ is complementary to the literal $\neg x$ and vice versa. 
A {\em clause} is a disjunction of literals. For notational
convenience, we sometimes also refer
to a clause as a set of literals.
The empty clause is denoted by~$\Box$, and is semantically equivalent
to false. We denote true by 1 and false by 0. 
A formula in {\em conjunctive normal form} (CNF) is a
conjunction of clauses.  
For a literal $l=x$ or $l=\lnot x$, we write $\var(l)$ for~$x$ and extend this notation to $\var(C)$ for a clause $C$.
 Let $\alpha$ be any partial assignment. For a clause $C$, we write
 $C|_{\alpha}$ for the clause obtained after applying the partial
 assignment $\alpha$ to $C$. For example, applying $\alpha:~x_1 \leftarrow 0$ to the clause $C \equiv (x_1 \vee x_2 \vee x_3)$ 
yields $C|_{\alpha} \equiv (x_2 \vee x_3)$, and applying $\alpha':~x_1 \leftarrow 1$ to the same clause gives $C|_{\alpha'} \equiv 1$. 
In the former case, we say that $C$ evaluates to the clause $(x_2 \vee x_3)$ under the assignment $\alpha$, and in the latter case, it 
evaluates to $1$ under the assignment $\alpha'$.
Similarly, for a formula $F$, we write
 $F|_{\alpha}$ for the restriction of the formula to the partial
 assignment. 

Quantified Boolean Formulas (QBFs) extend propositional logic with
the boolean quantifiers $\forall$ and $\exists$. They have the 
standard semantics that $\forall x. F$ is 
satisfied by the same truth assignments to its free variables as
$F|_{x = 0} \wedge F|_{x = 1}$, and $\exists x. F$ as $F|_{x = 0} \vee
F|_{x = 1}$. 
We assume that QBFs are fully quantified (no free variables), 
in \emph{closed prenex form}, and  with a CNF
matrix, i.e, we consider the form $Q_1 x_1 \dots
Q_n x_n. \phi$, where $Q_i \in \{\exists, \forall \}$, and 
the formula $\phi$  is in CNF and is defined  on the set of variables  
$X = \{x_1, \ldots , x_n\}$. Further, we assume that complementary
literals do not appear in the same clause; that is, no clause in the
matrix is tautological.  
 The propositional part $\phi$ is called the {\em matrix} and 
the rest the {\em prefix}. 
We abbreviate the prefix by the notation $\mathcal{Q} x$. 
The \emph{index} $\ind(x)$ of a variable is its position in the
prefix; thus $\ind(x_i)=i$. When $Q_i=\exists$ ($Q_i=\forall$,
respectively), we say that $x_i$ is an existential variable (a
universal variable, resp.\ ).   A literal $l$ is said to be
existential (universal) if $\var(l)$ is existential (universal,
resp.\ ). For a literal $l$, we write $\ind(l)$ for $\ind(\var(l))$.

Often it is useful to think of a QBF $\mathcal{Q} x .\, \phi$
 as a \emph{game} between the \emph{universal} and the
 \emph{existential player}. 
 In the $i$-th step of the game, player $Q_i$ assigns a value to the
 variable $x_i$. 
 The existential
 player wins the game iff the matrix~$\phi$ evaluates to $1$ under
 the assignment constructed in the game. The universal player wins
 iff the matrix~$\phi$ evaluates to $0$.
 Let $u$ be a universal variable~$u$ with index~$i$. At the $i$th step
 of the game, when the universal player has to decide what value to
 assign to $u$, all variables with index less than $i$ already have
 values assigned to them. A \emph{strategy for $u$} is thus a function
 from the set of assignments to the variables with index $<i$ to
 $\{0,1\}$.  A strategy for the universal player is a collection of strategies, one for each universal
 variable. A strategy  is a \emph{winning strategy} for the universal player if,
 using it, the universal player can win any possible game, 
 irrespective of the strategy used by the existential
 player. 
A QBF is false iff there exists a \emph{winning strategy}
 for the universal player
~(\cite{Goultiaeva-ijcai11}, \cite[Sec.\,4.2.2]{AroraBarak09},
 \cite[Chap.\,19]{Pap94}).


\smallskip
\textbf{Resolution-based calculi for QBF.}
We now give a brief overview of the main existing resolution-based
calculi for QBF. For the technical proofs in this paper, full details
are needed only for two systems, \lquprc\ and \irmc, both
of which are included in the overview. 

Recall that resolution for propositional proofs (where all variables
are existential) operates by inferring clauses, starting from the
clauses of the given formula (axioms), until the empty clause is
derived. From clauses $C\vee x$ and $D\vee \neg x$ that have been
already inferred, it can infer the clause $C \vee D$, by resolving on
the variable $x$. Here, $x$ is referred to as the pivot, and $C\vee D$
is the resolvent. The clauses $C\vee x$ and $D\vee \neg x$ are
referred to as parents of the clause $C\vee D$ in the proof. In a
representation of a proof as a graph, each clause in the proof is a
node, and edges are directed from parent to child.
This system can be augmented in various ways to
handle QBFs with universal variables.

We start by describing the proof systems modelling 
\emph{CDCL-based QBF solving};
their rules are summarized in Figure~\ref{fig:allrules}. The most basic and important 
system is \emph{Q-resolution (\qrc)} by Kleine B\"{u}ning et al.\
\cite{DBLP:journals/iandc/BuningKF95}. It is a resolution-like
calculus 
that operates on QBFs in prenex form with CNF matrix. The lines in a \qrc proof are clauses. In addition to the axioms,
\qrc comprises the resolution rule S$\exists$R and universal reduction $\forall$-Red (cf.\ Figure~\ref{fig:allrules}).
Note that the conditions in S$\exists$R explicitly disallow inferring
a tautology; this is syntactically essential for soundness. 

\begin{figure}[h!]
\framebox{\parbox{\breite}
  {
\begin{prooftree}
\AxiomC{}
\RightLabel{(Axiom)}
\UnaryInfC{$C$}
\end{prooftree}
\begin{minipage}{0.99\linewidth}
$C$ is a clause in the matrix. 
\end{minipage}

\begin{prooftree}
\AxiomC{$D\vee u$}
\RightLabel{($\forall$-Red)}
\UnaryInfC{$D$}
\DisplayProof\hspace{0cm}
\AxiomC{$D\vee u^*$}
\RightLabel{($\forall$-Red$^*$)}
\UnaryInfC{$D$}
\end{prooftree}
\begin{minipage}{0.99\linewidth}
Literal $u$ (or $u^*$) is universal and
  $\ind(u)\geq\ind(l)$ for all existential $l\in D$.
\end{minipage}

\begin{prooftree}
\AxiomC{$C_1\vee U_1\vee\{x\}$}
\AxiomC{$C_2 \vee U_2\vee\{\lnot{x}\}$}
\RightLabel{(Res)}
\BinaryInfC{$C_1\vee C_2\vee U$}
\end{prooftree}
\begin{minipage}{0.99\linewidth}
We consider four instantiations of the Res-rule:\\
\textbf{S$\exists$R:}  $x$ is existential.\\ If $z\in C_1$, then $\lnot{z}\notin C_2$. $U_1=U_2=U=\emptyset$.\\
\textbf{S$\forall$R:}  $x$ is universal. Other conditions same as S$\exists$R.\\
\textbf{L$\exists$R:}  $x$ is existential. \\If  $l_1\in C_1, l_2\in C_2$, $\var(l_1)=\var(l_2)=z$ then $l_1=l_2\neq z^*$.
 $U_1, U_2$ contain only universal literals with $\var(U_1)=\var(U_2)$.
 $\ind(x)<\ind(u)$ for each $u\in\var(U_1)$.\\
 If $w_1\in U_1, w_2\in U_2$, $\var(w_1)=\var(w_2)=u$ then $w_1=\lnot w_2$, $w_1=u^*$ or $w_2=u^*$. $U=\{u^* \mid u\in \var(U_1)\}$.\\
\textbf{L$\forall$R:}  $x$ is universal. Other conditions same as L$\exists$R.
\end{minipage}
\caption{The rules of CDCL-based proof systems}
\label{fig:allrules}
}}
\end{figure}
\emph{Long-distance resolution (\lqrc)} appears originally in the work of Zhang and Malik \cite{DBLP:conf/iccad/ZhangM02}
and was formalized into a calculus by Balabanov and Jiang \cite{DBLP:journals/fmsd/BalabanovJ12}.
It allows resolving clauses $C\vee x$ and $D\vee \neg x$ on an
existential variable $x$ even if $C$ and $D$ contain complementary
literals (where $C\vee D$ would be a tautology), provided the
complementary literals correspond to universal variables with index  greater than the index of the pivot variable $x$. 
It merges complementary literals of a universal variable~$u$
into the special literal~$u^*$ which then appears in the resolvent. We define $\ind(u^*) = \ind(u)$. 
\lqrc uses the rules  L$\exists$R, $\forall$-Red and $\forall$-Red$^*$ (cf.\ Figure~\ref{fig:allrules}).

\emph{QU-resolution (\qurc)} by Van Gelder \cite{Gelder12} removes the restriction from \qrc that the resolved variable must be an existential variable and allows resolution of universal variables. The rules of \qurc are  S$\exists$R, S$\forall$R and $\forall$-Red (cf.\ Figure~\ref{fig:allrules}). 
%

\emph{\lquprc} by Balabanov et al.\ \cite{BWJ14} extends \lqrc by allowing short and long distance resolution pivots to be universal. However, the pivot is never a merged literal $z^*$. \lquprc uses the rules L$\exists$R, L$\forall$R, $\forall$-Red and $\forall$-Red$^*$ (cf.\ Figure~\ref{fig:allrules}). 

The second type of calculi models \emph{expansion-based QBF solving}. These calculi are 
based on \emph{instantiation} of universal variables:  
\ecalculus by Janota and Marques-Silva \cite{JM15}, \irc, and \irmc by Beyersdorff et al.\ \cite{BCJ14}.  All these
calculi operate on clauses that comprise only existential variables from the original QBF, which
are additionally \emph{annotated} by a substitution to some universal variables, e.g.\ $\lnot
x^{0/u_1 1/u_2}$.
For any annotated literal $l^\sigma$, the substitution $\sigma$ must not make
assignments to variables at a higher quantification level than $l$, i.e.\ if
$u\in\domain(\sigma)$, then $u$ is universal and $\ind(u)<\ind(l)$.  
To preserve this invariant, we use the \emph{auxiliary notation $l^{[\sigma]}$}, which for an existential literal $l$  and an assignment $\sigma$ to the universal variables 
filters out all assignments that are not permitted,
i.e.\ $l^{[\sigma]}=l^{\comprehension{c/u\in\sigma}{\ind(u)<\ind(l), ~c \in \{0, 1 \} }}$.

As annotations can be partial assignments, we use auxiliary operations of \emph{completion} and \emph{instantiation}.  For assignments
$\tau$ and $\mu$, we write $\complete{\tau}{\mu}$ for the assignment $\sigma$ defined as follows:
$\sigma(x)=\tau(x)$ if $x\in\domain(\tau)$,  
otherwise $\sigma(x)=\mu(x)$ if $x\in\domain(\mu)\backslash\domain(\tau)$.
The operation $\complete{\tau}{\mu}$ is called \emph{completion} because $\mu$
provides values for variables not defined in $\tau$.  
The operation is associative and therefore
we can omit parentheses.  For an assignment $\tau$ and
an annotated clause $C$, the function $\instantiate(\tau,C)$ returns the annotated clause
$\comprehension{l^{[\complete{\sigma}{\tau}]}}{l^\sigma\in C}$.  The system \irc is
defined in Figure~\ref{fig:IRC}.  

\begin{figure}[h!]
  \framebox{\parbox{\breite}
  {
    \begin{prooftree}
      \AxiomC{}
      \RightLabel{(Axiom)}
      \UnaryInfC{$\comprehension{l^{[\tau]}}{l\in C, \var(l)\text{ is existential}} $}
    \end{prooftree}
    $C$ is a  clause from the matrix. \\$\tau=\comprehension{0/u}{u\text{ is universal in }C}$, where the notation $0/u$ for literals $u$ is shorthand for $0/x$ if $u=x$ and $1/x$ if $u=\neg x$.
    \begin{prooftree}
      \AxiomC{$x^\tau\lor C_1 $ }
      \AxiomC{$\lnot x^\tau\lor C_2 $}
      \RightLabel{(Res)}
      \BinaryInfC{$C_1\union C_2$}
 \end{prooftree}
    \begin{prooftree}
      \AxiomC{$C$}
      \RightLabel{(Instantiation)}
      \UnaryInfC{$\instantiate(\tau,C)$}
    \end{prooftree}
    $\tau$ is an assignment to universal variables with $\range(\tau) \subseteq \{0,1\}$.
  \caption{The rules of \irc  \cite{BCJ14}}\label{fig:IRC}
}}
\end{figure}

The calculus \irmc further extends \irc by enabling annotations containing $*$.
The rules of the calculus \irmc are presented in Figure~\ref{fig:IRMC}.
The symbol $*$  may be introduced by the merge rule, e.g.\ by
collapsing $x^{0/u}\lor x^{1/u}$ into $x^{*/u}$.  

\begin{figure}[h!]
  \framebox{\parbox{\breite}
  {
    Axiom and instantiation rules as in \irc in Figure \ref{fig:IRC}.
    \begin{prooftree}
      \AxiomC{$x^{\tau\cup\xi}\lor C_1 $ }
      \AxiomC{$\lnot x^{\tau\cup\sigma}\lor C_2 $}
      \RightLabel{(Res)}
      \BinaryInfC{$\instantiate(\sigma,C_1)\union\instantiate(\xi,C_2)$}
    \end{prooftree}
    \text{$\domain(\tau)$, $\domain(\xi)$ and $\domain(\sigma)$ are
   mutually disjoint.} $\range(\tau)\subseteq\{0,1\}$
    \begin{prooftree}
      \AxiomC{$C\lor b^\mu\lor b^\sigma$}
      \RightLabel{(Merging)}
      \UnaryInfC{$C\lor b^\xi$}
    \end{prooftree}
    \text{$\domain(\mu)=\domain(\sigma)$.}

    \text{$\xi=\comprehension{c/u}{c/u\in\mu,c/u\in\sigma}\union$}
    \text{\hphantom{$\xi$}$\comprehension{*/u}{c/u\in\mu,d/u\in\sigma,c\neq d}$ }
  \caption{The rules of \irmc \cite{BCJ14} }\label{fig:IRMC}
}}
\end{figure}

The simulation order of QBF resolution systems is shown in Figure~\ref{fig:sim-structure}. All proof systems have been exponentially separated (cf.\ \cite{BCJ15}).

\begin{figure}[h!]
  \parbox{\breite}{
\centering
\begin{tikzpicture}[scale=1.1]
\node[calcn](n1) at (3,1) {{\sf Tree-}\qrc} ;
\node[calcn](n2) at (3,2) {\qrc} ;
\node[expcalcn](n3) at (0,2) {\ecalculus} ;
\node[calcn](n4) at (2,3) {\lqrc} ;
\node[calcn](n5) at (4,3) {\qurc} ;
\node[calcn](n6) at (3,4) {\lquprc} ;
\node[expcalcn](n8) at (0,3) {\irc} ;
\node[expcalcn](n9) at (0,4) {\irmc} ;
\draw(n4)--(n2);
\draw(n6)--(n4);
\draw(n2)--(n8);
\draw(n8)--(n3);
\draw (n3)--(n1)--(n2)--(n5);
\draw(n5)--(n6);
\draw(n8)--(n9)--(n4);
\end{tikzpicture}
}
  \caption{The simulation order of QBF resolution systems. Systems on the left correspond to expansion-based solving, whereas the systems on the right are CDCL based.}\label{fig:sim-structure}
\end{figure}

We end this section with a  definition that is used in later
sections. It generalises the notion of weakening. 
For clauses containing only literals of the
form $x_i, \neg x_i$ (no $l^*$), clause $D$ weakens clause $C$ if
every literal in $C$ is also present in $D$; i.e.\ $C \subseteq D$. 
With merged literals, the analogous notion of weakening is as defined below. 

\begin{definition}\label{def:preceq}
For clauses $C,D$ we write $C\preceq D$ if for any literal $l\in C$ we have $l\in D$ or $l^*\in D$ and for any $l^*\in C$ we have $l^*\in D$.

For annotations $\tau$ and $\sigma$ we say that $\tau\preceq\sigma$ if $\domain(\tau)=\domain(\sigma)$ and for any $c/u\in \tau$ we have  $c/u\in \sigma$ or $*/u\in \sigma$ and for any  $*/u\in \tau$ we have $*/u\in \sigma$. 
If $C,D$ are annotated clauses, we write  $C\preceq D$ if there is an injective function $f:C \hookrightarrow D$ such that for all $l^\tau\in C$ we have  $f(l^\tau) =l^\sigma$ with  $\tau\preceq\sigma$.
\end{definition}
Note: the requirement above that $f$  is injective ensures that
$x^{0/u}\vee x^{1/v} \not\preceq x^{0/u,1/v}$.

\section{Feasible Interpolation and Feasible Monotone Interpolation}
\label{sec:interpolation}

In this section we show that feasible interpolation and feasible
monotone interpolation hold for \lquprc and \irmc. We adapt the
technique first used by \Pudlak \cite{Pud97} to re-prove and
generalise the result of \Krajicek \cite{Kra97}.

\subsection{The setting}\label{subsec:interpolation-setting}
Consider a false QBF  $\mathcal{F}$ of the form
$$\exists \vec{p} \mathcal{Q} \vec{q} \mathcal{Q} \vec{r} \big[
  A(\vec{p}, \vec{q}) \wedge B(\vec{p}, \vec{r})\big], 
$$
where, $\vec{p}$, $\vec{q}$, and $\vec{r}$ are mutually disjoint sets of
propositional variables, $A(\vec{p}, \vec{q})$ is a CNF formula on
variables $\vec{p}$ and $\vec{q}$, and $B(\vec{p}, \vec{r})$ is a CNF
formula on variables $\vec{p}$ and $\vec{r}$. 
Thus $\vec{p}$ contains all the 
common variables between them. The $\vec{q}$ and
$\vec{r}$ variables can be quantified arbitrarily, with any number of
alternations between quantifiers. The QBF is equivalent to the
following, not in prenex form
$$\exists \vec{p} \big[ \mathcal{Q} \vec{q}. A(\vec{p}, \vec{q}) \wedge \mathcal{Q} \vec{r}. B(\vec{p}, \vec{r})\big].$$

Let $\vec{a}$ denote an assignment to the $\vec{p}$ variables. We
denote $A(\vec{p},\vec{q})|_{\vec{a}}$ 
by $A(\vec{a},\vec{q})$ and $B(\vec{p},\vec{q})|_{\vec{a}}$ by
$B(\vec{a},\vec{q})$.

\begin{definition}
Let $\mathcal{F}$ be a false QBF of the form $\exists \vec{p} \mathcal{Q} \vec{q} \mathcal{Q} \vec{r}. \left[
  A(\vec{p}, \vec{q}) \wedge B(\vec{p}, \vec{r})\right]$.
An \emph{interpolation circuit} for $\mathcal{F}$ is a boolean circuit $G$
such that on every $0, 1$ assignment $\vec{a}$ for $\vec{p}$ we have
\begin{align*} 
G(\vec{a}) = 0 &\implies \mathcal{Q} \vec{q}. A(\vec{a}, \vec{q}) \text{ is false, and }\\  
G(\vec{a}) = 1 &\implies \mathcal{Q} \vec{r}. B(\vec{a}, \vec{r}) \text{ is false. }
\end{align*}
We say that a QBF proof system $S$ has \emph{feasible interpolation}
if there is an effective procedure that, given any 
$S$-proof $\pi$ of a QBF $\mathcal{F}$ of the form above, 
outputs an interpolation circuit for $\mathcal{F}$ of size
polynomial in the size of $\pi$.

We say that the procedure \emph{extracts} the circuit from the proof.

We say that $S$ has \emph{monotone feasible interpolation} if the
following holds: in the same setting as above, if $\vec{p}$ appears
only positively in $A(\vec{p}, \vec{q})$, then the  
interpolation circuit for $\mathcal{F}$ extracted from $\pi$ is monotone.
\end{definition}

As our main results, we show that both \lquprc and \irmc have monotone feasible interpolation.

Before proving the interpolation theorems, we first outline the general idea:

\subsubsection*{Proof idea} Fix a proof system $S\in \{$\lquprc, \irmc{}\,$\}$ and an $S$-proof $\pi$ of $\mathcal{F}$.
Consider the following definition of a $\vec{q}$-clause and an $\vec{r}$-clause.
\begin{definition}
We call a clause $C$ in $\pi$ a $\vec{q}$-clause
(resp.\ $\vec{r}$-clause), if $C$ contains only variables $\vec{p},
\vec{q}$ (resp. $\vec{p}, \vec{r}$). We also call $C$ a
$\vec{q}$-clause (resp.\ $\vec{r}$-clause), if $C$ contains only
$\vec{p}$ variables, but all its descendant clauses in the proof $\pi$
(all clauses with a directed path to $C$ in $\pi$) are $\vec{q}$
(resp.\ $\vec{r}$)-clauses. In the case of \irmc the variables
appearing in the annotations are irrelevant and can be from either set.
\end{definition}
From $\pi$ we construct a circuit $C_\pi$ with the $\vec{p}$-variables
as inputs: for each node $u$ with clause $C_u$ in the proof $\pi$,
associate a gate $g_u$ (or a constant-size circuit) in the circuit
$C_\pi$. Next, we inductively construct, for any assignment $\vec{a}$ to the
$\vec{p}$ variables, another proof-like structure $\pi'(\vec{a})$. For
each node $u$ with clause $C_u$ in the proof $\pi$, associate a clause
$C'_{u, \vec{a}}$ in the structure $\pi'(\vec{a})$.  Finally, we
obtain $\pi''(\vec{a})$ from the structure $\pi'(\vec{a})$ by
instantiating $\vec{p}$ variables to the assignment $\vec{a}$ (that
is, $C''_{u,\vec{a}} = C'_{u,\vec{a}}|_{\vec{a}}$ for each node $u$) and
doing some pruning, and show that $\pi''(\vec{a})$ is a valid proof in
$S$. We then find that if $C_\pi(\vec{a})=0$, then $\pi''(\vec{a})$
uses only $\vec{q}$-clauses and thus is a refutation of $\mathcal{Q}
\vec{q}. A(\vec{a}, \vec{q})$, and if $C_\pi(\vec{a})=1$, then
$\pi''(\vec{a})$ uses only $\vec{r}$-clauses and thus is a refutation
of $\mathcal{Q} \vec{r}. B(\vec{a}, \vec{r})$. Thus $C_\pi$ is the
desired interpolant circuit.

More precisely, we show by induction on the height of $u$ in $\pi$
(that is, the length of the longest path to $u$ from a source node in
$\pi$) that: 
\begin{enumerate}
	\item $C'_{u,\vec{a}} \preceq C_u$.

	\item $g_{u}(\vec{a}) = 0 \implies C''_{u,\vec{a}}$ is a $\vec{q}$-clause and can be obtained from the clauses of 
			$A(\vec{a},\vec{q})$ alone using the rules of $S$.

	\item $g_{u}(\vec{a}) = 1 \implies C''_{u,\vec{a}}$ is an $\vec{r}$-clause and can be obtained from the clauses of
			$B(\vec{a},\vec{r})$ alone using the rules of $S$.
\end{enumerate} 

From the above, we have the following conclusion. Let $r$ be the root of $\pi$. Then on any assignment $\vec{a}$ to the $\vec{p}$ variables we have:

\begin{enumerate}
	\item $C'_{r,\vec{a}} \preceq C_r = \Box$, so $C'_{r,\vec{a}} = \Box$. Therefore, $C''_{r,\vec{a}} = C'_{r,\vec{a}}|_{\vec{a}} = \Box$.

	\item $g_{r}(\vec{a}) = 0 \implies  \Box$ is a $\vec{q}$-clause and can be obtained from the clauses 			of $A(\vec{a},\vec{q})$ alone using the rules of system $S$. Hence by soundness of $S$, 
		$\mathcal{Q} \vec{q}. A(\vec{a}, \vec{q})$ is false.

	\item $g_{r}(\vec{a}) = 1 \implies  \Box$ is an $\vec{r}$-clause and can be obtained from the clauses 			of $B(\vec{a},\vec{r})$ alone using the rules of system $S$. Hence by soundness of $S$,
		$\mathcal{Q} \vec{r}. B(\vec{a}, \vec{r})$ is false.
\end{enumerate}
\noindent
Thus $g_r$, the output gate of the circuit, computes an interpolant. 

When $\mathcal{F}$ has only existential quantification, $\pi$ is a
classical resolution proof, and this is exactly the interpolant computed
by \Pudlak's method \cite{Pud97}. The challenge here is to construct $\pi'$
and $\pi''$ appropriately when the stronger proof systems are used for
general QBF, while maintaining the inductive invariants. 


\subsection{Interpolants from \lquprc proofs}
\label{subsec:lpq-interpolant}
We now implement the idea described above for \lquprc.

\begin{theorem} \label{thm:lqup}
   \lquprc has feasible interpolation.
\end{theorem}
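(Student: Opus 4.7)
The plan is to instantiate the generic schema sketched in Section~\ref{subsec:interpolation-setting} for \lquprc. Given an \lquprc\ refutation $\pi$ of $\mathcal{F}=\exists\vec{p}\,\mathcal{Q}\vec{q}\,\mathcal{Q}\vec{r}.\,[A(\vec{p},\vec{q})\wedge B(\vec{p},\vec{r})]$, I process the nodes $u$ of $\pi$ in topological order, attaching to each node (i) a small circuit piece whose output gate $g_u$ depends only on $\vec p$, and (ii) for every $\vec p$-assignment $\vec a$ a clause $C'_{u,\vec a}\preceq C_u$, so that its restriction $C''_{u,\vec a}=C'_{u,\vec a}|_{\vec a}$ is derivable in \lquprc\ from $A(\vec a,\vec q)$ alone when $g_u(\vec a)=0$ and from $B(\vec a,\vec r)$ alone when $g_u(\vec a)=1$. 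Taking $C_\pi$ to be the circuit whose output is $g_r$ for the root $r$ of $\pi$, the correctness argument at $r$ then follows verbatim from the outline in Section~\ref{subsec:interpolation-setting}.

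The definitions of $g_u$ and $C'_{u,\vec a}$ are by case analysis on the rule at $u$. For an axiom, set $g_u\equiv 0$ when $C_u$ is a clause of $A$ and $g_u\equiv 1$ when it is a clause of $B$, and let $C'_{u,\vec a}:=C_u$. For a resolution step (L$\exists$R or L$\forall$R) with pivot $x$ and parents $u_1$ containing $x$ and $u_2$ containing $\lnot x$, I split on the side of $x$. If $x\in\vec p$ (necessarily existential, since $\vec p$ is a block of existentials), I use \Pudlak's ITE gate $g_u:=(\lnot x\wedge g_{u_1})\vee(x\wedge g_{u_2})$ and take $C'_{u,\vec a}$ from whichever parent's restricted clause is not made trivially true by $\vec a(x)$, with the pivot literal deleted if still present. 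If $x\in\vec q$, no $B$-side clause can contain $x$, so I put $g_u:=g_{u_1}\vee g_{u_2}$ and either form a (possibly long-distance) resolvent of $C'_{u_1,\vec a}$ and $C'_{u_2,\vec a}$ on $x$ (when both gates are $0$) or inherit the $B$-side parent's clause (when a gate is $1$, in which case that $C'$ already misses the pivot). The case $x\in\vec r$ is dual with $g_u:=g_{u_1}\wedge g_{u_2}$. For $\forall$-Red or $\forall$-Red$^*$ with premise $u_1$, set $g_u:=g_{u_1}$ and $C'_{u,\vec a}:=C'_{u_1,\vec a}$, dropping the reduced literal if still there.

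The three inductive invariants are then established by induction on the height of $u$ in $\pi$. Invariant~(1), $C'_{u,\vec a}\preceq C_u$, is preserved because each construction above only deletes literals, inherits from a parent, or resolves two $\preceq$-subclauses of the parents on a shared pivot, and all these operations are compatible with the merge-aware $\preceq$ of Definition~\ref{def:preceq}. For invariants~(2) and~(3) I must certify that the single \lquprc\ step I invoke in $\pi''(\vec a)$ is legal: the index conditions for $\forall$-Red and for the sets $U_1,U_2$ in L$\exists$R/L$\forall$R are inherited because $\mathcal{Q}\vec q$ (resp.\ $\mathcal{Q}\vec r$) is just the restriction of the original prefix to $\vec q$ (resp.\ $\vec r$), and $\vec p$ sits outermost.

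The main obstacle I expect is the interaction of long-distance resolution with the merged literals $z^*$ and with $\preceq$. When a step in $\pi$ merged two complementary universal literals whose ``ancestry'' now falls on different sides of the split, the corresponding literal in $C'_{u,\vec a}$ may appear in its unmerged form on one side only, and I must check both that $C'_{u,\vec a}\preceq C_u$ continues to hold in the stronger annotated sense (using the injection from Definition~\ref{def:preceq}) and that the L$\exists$R/L$\forall$R step replayed in $\pi''(\vec a)$ still satisfies the matching $\var(U_1)=\var(U_2)$, the merge-literal conditions on $w_1,w_2$, and the index bound on the pivot relative to $\var(U_1)$. Once this bookkeeping is settled, $g_r$ computes an interpolant of size linear in $|\pi|$, giving feasible interpolation for \lquprc.
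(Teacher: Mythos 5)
Your proposal follows the same route as the paper: the same circuit gadget per node (constant at axioms, selector/ITE at $\vec p$-pivots, OR at $\vec q$-pivots, AND at $\vec r$-pivots, identity at reductions), the same auxiliary structures $\pi'$ and $\pi''$, and the same three invariants proved by induction on height. The circuit construction, the leaf case, the $\forall$-reduction case, and the $\vec p$-resolution case all coincide with the paper's proof, and you correctly identify the long-distance bookkeeping as the delicate part.

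There is, however, a gap in your $\vec q$-resolution case (and dually in $\vec r$-resolution). You prescribe: inherit the $B$-side clause when some gate is $1$ (since the pivot is then absent), and \emph{always} form a resolvent on $x$ when $g_v(\vec a)=g_w(\vec a)=0$. This presupposes that the pivot $x$ is actually present in both $C'_{v,\vec a}$ and $C'_{w,\vec a}$. But the invariant is only $C'_{u,\vec a}\preceq C_u$, not equality, and the $C'$-clauses genuinely shrink along the way: for example, at a $\vec p$-resolution node the construction discards the clause of the parent satisfied by $\vec a$, so a $\vec q$-literal occurring only in that discarded parent vanishes from $C'$ even though the node's gate may be~$0$. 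If such a vanished literal is later used as a pivot in a $\vec q$-resolution where both gates evaluate to $0$, your step attempts to resolve on a literal that is no longer there, which is not a legal \lquprc\ inference. The paper's proof handles exactly this: when both gates are $0$ it first tests whether $x\notin C'_{v,\vec a}$ (inherit $C'_{v,\vec a}$) or $\neg x\notin C'_{w,\vec a}$ (inherit $C'_{w,\vec a}$), and only resolves when neither shortcut applies. Your construction needs the same extra subcase; without it invariant~(1) still holds but invariants~(2)/(3) cannot be certified because the replayed step is not a valid rule application.
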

\begin{proof}
As mentioned in the proof idea, for an \lquprc proof $\pi$ of $\mathcal{F}$  we first describe the circuit $C_{\pi}$ with input $\vec{p}$.

\medskip\noindent
{\bf Construction of the circuit $C_{\pi}$:} The DAG underlying the
circuit is exactly the same as the DAG underlying the proof $\pi$. 
For each node $u$ with clause $C_u$ in $\pi$ we associate a gate $g_u$ as follows:

\begin{description}

	\item[$u$ is a leaf node: ] If $C_u \in A(\vec{p}, \vec{q})$ then $g_u$ is a constant $0$ gate. 
				    If $C_u \in B(\vec{p}, \vec{r})$ then $g_u$ is a constant $1$ gate.
\end{description}

\noindent
{\bf $u$ is an internal node:} We distinguish four cases. 
		
\begin{enumerate}
			
\item $u$ was derived by 
a universal reduction step. 
In this case put a no-operation gate (identity gate) for $g_u$.

\item $u$ corresponds to a resolution step with an existential
  variable $x \in \vec{p}$ as pivot.  
Nodes $v$ and $w$ are its two parents, i.e.
$$\frac{\overbrace{C_1 \vee x}^\text{node $v$} \hspace{7mm} \overbrace{C_2\vee \neg x}^\text{node $w$}}{\underbrace{ C }_\text{node $u$} }$$
In this case, put a selector gate
$\sel(x, g_v, g_w)$ for $g_u$. Here,
$\sel(x, a, b) = a$, when $x = 0$ and
$\sel(x, a, b) = b$, when $x =
1$. That is,
$\sel(x, a, b) = (\neg x \wedge a) \vee
(x \wedge b)$. 
Note that all the variables in
$\vec{p}$ are existential variables
without annotations (equivalently, with empty annotations). 

\item $u$ corresponds to a resolution step with an existential or universal variable $x \in \vec{q}$ as pivot. Put an OR gate for $g_u$.

\item $u$ corresponds to a resolution step with an existential or universal variable $x \in \vec{r}$ as pivot. Put an AND gate for $g_u$.
\end{enumerate}
This completes the description of the circuit $C_{\pi}$.

\medskip\noindent
{\bf Construction of $\pi'$ and $\pi''$:} 
Following our proof idea, we now describe, for each node $u$ in $\pi$
with clause $C_u$, the associated clause $C'_{u,\vec{a}}$ in
$\pi'(\vec{a})$. Once $\pi'(\vec{a})$ is defined, 
the structure $\pi''(\vec{a})$ is
obtained by instantiating $\vec{p}$ variables by
the assignment $\vec{a}$ in each clause of $\pi'(\vec{a})$, cutting
away any edge out of a node 
where the clause evaluates to $1$, and deleting nodes which now have
no path to the root node.
That is, for each  node $u$, if $C'_{u,\vec{a}}|_{\vec{a}} = 1$, then
the node $u$ is removed,  and otherwise the node $u$ survives and 
 the associated clause $C''_{u,\vec{a}}$ is equal 
to $C'_{u,\vec{a}}|_{\vec{a}}$.

We show (by induction on the height of $u$ in $\pi$) that: 

\begin{enumerate}
	\item $C'_{u,\vec{a}} \preceq C_u$.

	\item $g_{u}(\vec{a}) = 0 \implies C''_{u,\vec{a}}$ is a $\vec{q}$-clause and can be obtained from the clauses of 
			$A(\vec{a},\vec{q})$ alone using the rules of system \lquprc.

	\item $g_{u}(\vec{a}) = 1 \implies C''_{u,\vec{a}}$ is a $\vec{r}$-clause and can be obtained from the clauses of
			$B(\vec{a},\vec{r})$ alone using the rules of system \lquprc.
\end{enumerate}
As described in the proof outline, this suffices to conclude that
$C_\pi$ computes an interpolant.
We now present the construction details.

\medskip\noindent
{\bf At leaf level:} Let node $u$ be a leaf in $\pi$. Then
$C'_{u,\vec{a}} = C_u$; that is, we copy the clause as it is. Trivially,
we have $C'_{u,\vec{a}} \preceq C_u$. By construction of $C_\pi$, the
conditions concerning $g_u(\vec{a})$ and $C''_{u,\vec{a}}$ are satisfied.

\medskip\noindent
At an internal node we distinguish four cases based on the rule that was applied.
	
\medskip\noindent
{\bf At an internal node with universal reduction:} 
Let node $u$ be an internal node in $\pi$ corresponding to a universal reduction step on some universal literal
$x$ or $x^*$. Let node $v$ be its only parent. Here we consider only the case where the  universal literal is $x$. The case of $x^*$ is identical. We have 
$$\frac{C_v =\overbrace{D_v \vee x }^\text{node $v$}}{C_u =
  \underbrace{D_v}_\text{node $u$}}, \qquad x \text{ is a universal
  literal, } \forall \text{~existential literal~}  l \in D_v, \ind(l) < \ind(x).$$

In this case, define $C'_{u,\vec{a}} = C'_{v,\vec{a}} \setminus \{ x, \neg x, x^* \}$. 
By induction, $C'_{v,\vec{a}} \preceq C_v = D_v \vee x $. Therefore, $C'_{u,\vec{a}} = C'_{v,\vec{a}} \setminus \{ x , \neg x, x^*\}  \preceq D_v = C_u$.

If $g_u(\vec{a}) = 0 $, then we know that $g_v(\vec{a}) = 0 $ as
$g_u(\vec{a}) = g_v(\vec{a})$. By the induction hypothesis, we know
that $C''_{v,\vec{a}} = C'_{v,\vec{a}}|_{\vec{a}}$ is a $\vec{q}$-clause and can be derived using
$A(\vec{a}, \vec{q})$ alone via \lquprc. Recall that $C'_{u,\vec{a}} = C'_{v,\vec{a}} \setminus \{x, \neg x, x^*\}$
in this case. Since $\vec{a}$ is an assignment to the $\vec{p}$
variables and $x \notin \vec{p}$, $C'_{u,\vec{a}}|_{\vec{a}} =
C''_{u,\vec{a}}$ is also a $\vec{q}$-clause and can be derived using
$A(\vec{a}, \vec{q})$ alone via \lquprc. (Either
$C''_{u,\vec{a}}$ already equals $C''_{v,\vec{a}}$, or $x$ needs to be dropped. In the
latter case,
the condition on $\ind(x)$
is satisfied at $C''_{u,\vec{a}}$ because it is satisfied
at $C_v$ in $\pi$ and $C'_{v,\vec{a}} \preceq C_v$.
So we can drop $x$ from $C''_{v,\vec{a}}$ to get $C''_{u,\vec{a}}$.)

The situation is dual for the case when $g_u(\vec{a})=1$; we get
$\vec{r}$-clauses.

\medskip\noindent
{\bf At an internal node with $\vec{p}$-resolution:}   
Let node $u$ in the proof $\pi$ correspond to a
          resolution step with pivot $x \in \vec{p}$. Note that
          $x$ is existential, as $\vec{p}$ variables occur only existentially in $\mathcal{F}$. We have
$$
\frac{C_v = \overbrace{C_1 \vee U_1 \vee x}^\text{node $v$} \hspace{7mm} \overbrace{C_2 \vee U_2 \vee \neg x}^\text{node $w$} = C_w}{C_u = \underbrace{C_1 \vee C_2 \vee U}_\text{node $u$}}. 
$$
In the assignment $\vec{a}$, if $x = 0$, then define
$C'_{u,\vec{a}} = C'_{v,\vec{a}} \setminus \{x\} $ and if $x = 1$ then define
$C'_{u,\vec{a}} = C'_{w,\vec{a}} \setminus \{ \neg x\}$. By induction, we have
$C'_{v,\vec{a}} \preceq C_v$ and $C'_{w,\vec{a}} \preceq
C_w$. 
So, if $x = 0$, we have $C'_{u,\vec{a}} = C'_{v,\vec{a}} \setminus \{x\} \preceq C_1 \vee U_1 \preceq C_u$.
If $x = 1$, we have $C'_{u,\vec{a}} \preceq C'_{w,\vec{a}} \setminus \{ \neg x\} \preceq C_2 \vee U_2 \preceq C_u$.


In this case $g_u$ is a selector gate. If $x = 0$ in the assignment
$\vec{a}$, then $g_u(\vec{a}) = g_v(\vec{a})$ and $C''_{u,\vec{a}} =
C''_{v,\vec{a}}$. Since the conditions concerning  $g_v(\vec{a})$  and
$C''_{v,\vec{a}}$ are satisfied by induction,  the conditions concerning  $g_u(\vec{a})$  and
$C''_{u,\vec{a}}$ are satisfied as well. Similarly, if $x=1$, then  
$g_u(\vec{a}) = g_w(\vec{a})$ and $C''_{u,\vec{a}} =
C''_{w,\vec{a}}$, and the statements that are inductively true at $w$
hold at $u$ as well. 

\medskip\noindent
{\bf At an internal node with $\vec{q}$-resolution:} 
  Let node $u$ in the proof $\pi$ correspond to a
          resolution step with pivot $x \in \vec{q}$. Note that $x$ may be existential or universal. We have
$$
  \frac{C_v = \overbrace{C_1 \vee U_1 \vee x}^\text{node $v$} \hspace{7mm} \overbrace{C_2 \vee U_2 \vee \neg x}^\text{node $w$} = C_w}{C_u = \underbrace{C_1 \vee C_2 \vee U}_\text{node $u$}}, \quad x \in \vec{q}.
$$

If $g_v(\vec{a}) = 1$ then define $C'_{u,\vec{a}} = C'_{v,\vec{a}}$.
By induction,  we know that  $C''_{u,\vec{a}} = C''_{v,\vec{a}}$ is 
an $\vec{r}$-clause.  Since $x$ is a $\vec{q}$-variable and is not
instantiated by $\vec{a}$,  it must be
the case that $x\not\in C'_{v,\vec{a}}$.
Thus $C'_{u,\vec{a}} = C'_{v,\vec{a}} \preceq C_v \setminus \{x\}
  \preceq C_u$.
  
Else if $g_w(\vec{a}) =1$, define $C'_{u,\vec{a}} = C'_{w,\vec{a}}$.
By a similar analysis as above, $C'_{u,\vec{a}} = C'_{w,\vec{a}}
\preceq C_w \setminus \{\neg x\}
  \preceq C_u$.

If $g_v(\vec{a}) = g_w(\vec{a}) = 0$, and if $x \notin
C'_{v,\vec{a}}$, define $C'_{u,\vec{a}} = C'_{v,\vec{a}}$. Otherwise, if
$\neg x \notin C'_{w,\vec{a}} $, define $C'_{u,\vec{a}} =
C'_{w,\vec{a}}$. It follows from induction that $C'_{u,\vec{a}}
\preceq C_u$. 

Else, define $C'_{u,\vec{a}}$ to be the resolvent of
$C'_{v,\vec{a}}$ and $C'_{w,\vec{a}}$ on $x$. 
By induction, we know that $C'_{v,\vec{a}} \setminus \{x\}\preceq C_1
\vee U_1$ and $C'_{w,\vec{a}}\setminus \{\neg x\} \preceq C_2 \vee
U_2$. Hence 
$C'_{u,\vec{a}} \preceq C_1 \vee C_2 \vee U =C_u$. 

We need to verify the conditions on $g_u(\vec{a})$ and $C''_{u,\vec{a}}$.
The case when $g_u(\vec{a}) = 1$ is immediate,
since $C''_{u,\vec{a}}$ copies a clause known by induction to be an
$\vec{r}$-clause. So now
consider  the case when
$g_u(\vec{a}) = 0$.
By induction, we know that both $C''_{v,\vec{a}} =
C'_{v,\vec{a}}|_{\vec{a}}$ and $C''_{w,\vec{a}} =
C'_{w,\vec{a}}|_{\vec{a}}$ are $\vec{q}$-clauses and can be derived
using $A(\vec{a}, \vec{q})$ alone via \lquprc.  

We have three cases. If $C'_{u,\vec{a}} = C'_{v,\vec{a}}$ or
$C'_{u,\vec{a}} = C'_{w,\vec{a}}$, then by induction we are done.
Otherwise, $C'_{u,\vec{a}}$ is obtained from $C'_{v,\vec{a}}$ and
$C'_{w,\vec{a}}$ via a resolution step on pivot $x$.  Since
$\vec{a}$ is an assignment to the $\vec{p}$ variables and $x \notin
\vec{p}$, $C''_{u,\vec{a}}$ can be derived from $C''_{v,\vec{a}}$ and
$C''_{w,\vec{a}}$ via the same resolution step. 

\noindent
{\bf Note:} A simple observation is that $C'_{u,\vec{a}}$ is always a
subset of $C_u$ with only one exception, which is that some special
symbol $u^*$ in $C_u$ may be converted into $u$ in
$C'_{u,\vec{a}}$. This leads us to define the relation
$\preceq$. Also, the resolution step in $\pi''(\vec{a})$ is applicable in
\lquprc because (1)~every mergable universal variable in
$C''_{v,\vec{a}}$ and $C''_{w,\vec{a}} $ was also mergable earlier in
$C_v$ and $C_w$ in $\pi$. (2)~Every common non-mergable existential
variable in $C''_{v,\vec{a}}$ and $C''_{w,\vec{a}}$ was also a
non-mergable existential variable in $C_v$ and $C_w$. (3)~Every
non-mergable universal variable in $C''_{v,\vec{a}}$ and
$C''_{w,\vec{a}}$ was also a non-mergable universal pair in $C_v$ and
$C_w$. (4)~The operations do not disturb the indices of variables,
therefore if variable $x$ satisfies the index condition in $\pi$ it
satisfies it in $\pi''(\vec{a})$ as well.

\medskip\noindent
{\bf At an internal node with $\vec{r}$-resolution:} 
  Let node $u$ in $\pi$ correspond to a resolution step with pivot $x \in \vec{r}$. This is dual to the case above.
\end{proof}


\subsection{Interpolants from \irmc proofs}
\label{subsec:irm-interpolant}
We now establish the interpolation theorem for the expansion-based 
calculi, following the same overall idea described in
Section~\ref{subsec:interpolation-setting}.

\begin{theorem} \label{thm:irm}
 \irmc  has feasible interpolation.
\end{theorem}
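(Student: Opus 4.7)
The plan is to follow the three-step template established in the proof of Theorem~\ref{thm:lqup}: build a circuit $C_\pi$ whose DAG matches the DAG of $\pi$, construct an intermediate proof-like structure $\pi'(\vec{a})$ with a clause $C'_{u,\vec{a}}$ for each node $u$, and then obtain $\pi''(\vec{a})$ by restricting with $\vec{a}$ and pruning nodes whose clause evaluates to $1$. I maintain by induction on the height of $u$ the same three invariants used before: $C'_{u,\vec{a}} \preceq C_u$; $g_u(\vec{a})=0$ implies $C''_{u,\vec{a}}$ is a $\vec{q}$-clause derivable in \irmc from $A(\vec{a},\vec{q})$ alone; and $g_u(\vec{a})=1$ implies $C''_{u,\vec{a}}$ is an $\vec{r}$-clause derivable in \irmc from $B(\vec{a},\vec{r})$ alone. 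Recall that in \irmc every literal of every clause is existential, and that the variables appearing inside annotations are ignored when classifying a clause as a $\vec{q}$-clause or an $\vec{r}$-clause.

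For $C_\pi$, leaves from $A$ get constant-$0$ gates and leaves from $B$ get constant-$1$ gates. Because every \irmc resolution pivot is existential, the three Res cases parallel those of Theorem~\ref{thm:lqup}: a pivot $x\in\vec{p}$ gets a selector gate $\sel(x,g_v,g_w)$ (here $x$ necessarily carries the empty annotation, since no universal variable precedes $\vec{p}$ in the prefix); a pivot $x\in\vec{q}$ gets an OR gate; a pivot $x\in\vec{r}$ gets an AND gate. The two unary rules of \irmc receive identity gates: Instantiation modifies only annotations, which assign universal literals, and Merging acts on an existential literal that is already present in the clause, so neither rule can move a clause between the $\vec{q}$- and $\vec{r}$-classifications.

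For $\pi'(\vec{a})$, the axiom and three resolution cases are direct analogues of the corresponding cases in the proof of Theorem~\ref{thm:lqup}, replacing the \lquprc resolvent by the \irmc resolvent (which additionally applies $\instantiate(\sigma,\cdot)$ and $\instantiate(\xi,\cdot)$ to the two parents before taking their union). At an Instantiation node with parent $v$ and substitution $\tau$ I set $C'_{u,\vec{a}} = \instantiate(\tau, C'_{v,\vec{a}})$; monotonicity of $\instantiate(\tau,\cdot)$ with respect to $\preceq$ gives the first invariant, and since $\instantiate$ only manipulates universal annotations while $\vec{a}$ only assigns the $\vec{p}$-variables, $\instantiate$ commutes with the restriction by $\vec{a}$, so the same Instantiation step extends the inductively given $\pi''$-derivation. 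At a Merging node I apply the Merging rule to $C'_{v,\vec{a}}$ if both copies of the literal being merged survive in $C'_{v,\vec{a}}$, and otherwise I simply copy $C'_{v,\vec{a}}$, using the fact that by construction of the merged annotation $\xi$ we have $\mu\preceq\xi$ and $\sigma\preceq\xi$.

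The principal obstacle is verifying that each Res step surviving in $\pi''(\vec{a})$ remains a legal \irmc inference. The Res rule imposes a disjoint-domain constraint on the common annotation $\tau$ and the two side annotations $\xi,\sigma$ of the pivot, and one must argue that this structural decomposition survives the replacement of $C_v,C_w$ by the weakened clauses $C'_{v,\vec{a}},C'_{w,\vec{a}}$. The decisive observation, already highlighted at the end of the \lquprc proof, is that along $\preceq$ the domains of annotations are preserved and only some concrete values $c/u$ may be replaced by $*/u$; consequently the $\tau,\xi,\sigma$ decomposition used in $\pi$ continues to describe the pivot annotations in $\pi'$, and the $\instantiate(\sigma,\cdot),\instantiate(\xi,\cdot)$ operations prescribed by the rule transport faithfully across the two structures. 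The same domain-preservation fact justifies that Merging remains applicable after restriction. Once this is in place the induction closes as in Theorem~\ref{thm:lqup}, and the output gate of $C_\pi$ computes the desired interpolant.
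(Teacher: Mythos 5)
Your overall template and circuit construction (identity gates at Instantiation and Merging nodes, selector/OR/AND at resolution nodes) match the paper exactly, but the construction of $\pi'$ has a genuine gap at Instantiation nodes, and a smaller one at the resolution cases. You claim that ``monotonicity of $\instantiate(\tau,\cdot)$ with respect to $\preceq$ gives the first invariant,'' i.e.\ that $C'_{v,\vec a}\preceq C_v$ implies $\instantiate(\tau,C'_{v,\vec a})\preceq\instantiate(\tau,C_v)=C_u$. This is false: $\preceq$ on annotated clauses requires an \emph{injective} map, and instantiation can collapse two literals of $C_v$ that have $*$ at some position while the corresponding literals of $C'_{v,\vec a}$ carry distinct concrete values there and therefore do not collapse. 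For example, with $C_v=\{l^{\{*/u,\,0/v\}},\,l^{\{*/u,\,1/w\}}\}$, $C'_{v,\vec a}=\{l^{\{0/u,\,0/v\}},\,l^{\{1/u,\,1/w\}}\}$ and $\tau=\{0/v,\,1/w\}$, the instantiation of $C_v$ is the singleton $\{l^{\{*/u,\,0/v,\,1/w\}}\}$ while the instantiation of $C'_{v,\vec a}$ has two elements, so no injection exists. The paper handles this by defining $\minst$, which follows every instantiation with merging steps on colliding literals, and it is $\minst(\tau,C'_{v,\vec a},C_u)$, not $\instantiate(\tau,C'_{v,\vec a})$, that is taken as $C'_{u,\vec a}$. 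You would need the same device.

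Two further details in the resolution cases are underspecified in your sketch. First, when the pivot is absent from one parent (or one parent has $g=1$), you cannot simply copy that parent's clause as in the \lquprc\ case, because $C_u$ is $\instantiate(\sigma,D_v)\cup\instantiate(\xi,D_w)$ and the annotation domains change under instantiation, so $C'_{v,\vec a}\not\preceq C_u$ in general; you must perform an Instantiation step in $\pi'$. Second, since the side annotations $\xi,\sigma$ of the \irmc\ Res rule may contain $*$-entries, but the standalone Instantiation rule only permits $\{0,1\}$-valued substitutions, you must replace them by explicit $\{0,1\}$-versions $\xi',\sigma'$ (with $*$ mapped to, say, $0$) before applying the Instantiation step; since $\xi'\preceq\xi$ and $\sigma'\preceq\sigma$, the invariant $C'_{u,\vec a}\preceq C_u$ still goes through. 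These are precisely the technical points the paper's proof of Theorem~\ref{thm:irm} spells out and your proposal passes over.
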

\begin{proof}
  This proof closely follows that of Theorem~\ref{thm:lqup}, but
  with several  changes in the proof details. We describe the changes
  here.

{\bf Construction of the circuit $C_{\pi}$:}
The circuit construction is very similar to  that for
\lquprc. Leaves and resolution nodes are treated as before.
Instantiation and merging nodes are treated as the universal reduction
nodes were; that is, the corresponding gates are  no-operation
(identity) gates. 

{\bf Construction of $\pi'$ and $\pi''$:} 
As before we construct a proof-like structure $\pi'(\vec{a})$,
which depends on the assignment $\vec{a}$ to the $\vec{p}$ variables,
the proof $\pi$ of $\mathcal{F}$, and the circuit $C_{\pi}$.  For each
node $u$ in $\pi$, with clause $C_u$, we associate a clause
$C'_{u,\vec{a}}$ in $\pi'(\vec{a})$, and  let $C''_{u,\vec{a}}$ be the
instantiation of $C'_{u,\vec{a}}$  by the assignment $\vec{a}$. We
show (by induction on the height of $u$ in $\pi$) that:  
\begin{enumerate}
	\item $C'_{u,\vec{a}} \preceq C_u$.

	\item $g_{u}(\vec{a}) = 0 \implies C''_{u,\vec{a}}$ is a $\vec{q}$-clause and can be obtained from the clauses of 
			$A(\vec{a},\vec{q})$ alone using the rules of system \irmc.

	\item $g_{u}(\vec{a}) = 1 \implies C''_{u,\vec{a}}$ is a $\vec{r}$-clause and can be obtained from the clauses of
			$B(\vec{a},\vec{r})$ alone using the rules of system \irmc.
\end{enumerate} 
Once again, as described in the proof outline, this suffices to conclude that the circuit $C_\pi$ computes an interpolant.

Recall that for annotated clauses,
the meaning of $\preceq$ is slightly different and is  given in Definition~\ref{def:preceq}.

\medskip\noindent
{\bf At a leaf level: } Let node $u$ be a leaf in $\pi$. Then
$C'_{u,\vec{a}} = C_u$; that is, copy the clause as it is. Trivially,
$C'_{u,\vec{a}} \preceq C_u$.
By construction of $C_\pi$, the
conditions concerning $g_u(\vec{a})$ and $C''_{u,\vec{a}}$ are satisfied. 

%
\medskip\noindent
{\bf At an internal node with instantiation:} 
  Let node $u$ be an internal node in $\pi$ corresponding to an instantiation step by $\tau$. And let node $v$ be its only parent.
We know $C_u=\instantiate(\tau, C_v)$. 

Suppose $l^{\sigma'} \in \instantiate(\tau, C'_{v,\vec{a}})$. Then for some $\xi'$, $l^{\xi'} \in
C'_{v,\vec{a}}$, and $l^{\sigma'} = l^{[\complete{\xi'}{\tau}]}$; hence
$\sigma'$ is a subset of $\xi'$ completed with $\tau$. By
induction we know that $C'_{v,\vec{a}}\preceq C_v$. We have an injective function $f:C'_{v,\vec{a}}\hookrightarrow C_v$ that demonstrates this. Let $f(l^{\xi'})=l^\xi$. Hence  $l^\xi\in
C_v$ for some $\xi'\preceq\xi$.
So $l^{\sigma} = l^{[\complete{\xi}{\tau}]}\in C_u$.
Since the annotations introduced by instantiation match,
$\sigma'\preceq \sigma $. We use this to define a function
$g:\instantiate(\tau, C'_{v,\vec{a}})\rightarrow C_u$ where
$g(l^{\sigma'})=l^{\sigma}$. Now we find any  $l^{\tau_1},l^{\tau_2}$
where $g(l^{\tau_1})=g(l^{\tau_2})=l^\tau$ and perform a merging step
on $l^{\tau_1}$ and $l^{\tau_2}$; note that the resulting literal
$l^{\tau'}$ will still satisfy $\tau'\preceq \tau$. Eventually we get a clause which we define as $\minst(\tau, C'_{v,\vec{a}},C_u)=C'_{u,\vec{a}}$ where this function is injective. We will use this notation to refer to this process of instantiation and then deliberate merging to get $\preceq C_u$.
 
Therefore $C'_{u,\vec{a}}\preceq C_u$. 

If the node $u$ is not pruned out in $\pi''(\vec{a})$, then
$C''_{u,\vec{a}}$  contains no satisfied $\vec{p}$ literals; hence
neither does $C'_{v,\vec{a}}$. Therefore $C''_{u,\vec{a}}$ is derived from  $C''_{v,\vec{a}}$; this is a valid step in the proof
system. 

Because we only use instantiation and merging or a dummy step, $C''_{u,\vec{a}}$
is a $\vec{q}$-clause if and only if $C''_{v,\vec{a}}$ is a
$\vec{q}$-clause. Therefore the no-operation (identity) gate $g_u$ gives a valid
result by induction. 

\medskip\noindent
{\bf At an internal node with merging:} 
  Let node $u$ be an internal node in $\pi$ corresponding to a merging step. 
Let node $v$ be its only parent.
We have $$\frac{C_v=D_v\vee b^\mu\vee b^\sigma}{C_u=D_v\vee b^\xi}$$
where $\domain(\mu)=\domain(\sigma)$ and 
$\xi$ is obtained by merging the annotations $\mu,\sigma$. That is,
$\xi = \AMerge(\mu,\sigma) = \merge{\mu}{\sigma}$. 
Note that $\mu, \sigma \preceq \AMerge(\mu,\sigma)$. 

Note that from the induction hypothesis, $C'_{v,\vec{a}}\preceq C_v$, so there is an injective function $f:C'_{v,\vec{a}}\hookrightarrow C_v$. Suppose $C'_{v,\vec{a}}$ contains two distinct literals $b^{\mu'}$
and $b^{\sigma'}$ where $f(b^{\mu'})=b^{\mu}$ and $f(b^{\sigma'})=b^{\sigma}$.
%
So $C'_{v,\vec{a}} = D'_v\vee b^{\mu'}\vee b^{\sigma'}$.
Then let 
$C'_{u,\vec{a}}=D'_v\vee b^{\xi'}$, where
$\xi' = \AMerge(\mu',\sigma')$. 
Otherwise let $C'_{u,\vec{a}}=C'_{v,\vec{a}}$.

We first observe whenever we do actual merging, if $c/u\in \xi'$ then
one of the following holds:
\begin{enumerate}
\item $c/u\in \sigma'$. Then $c/u \in \sigma$ or $*/u\in \sigma$,  and
so $c/u \in \xi$ or $*/u\in \xi$. 
\item $c/u\in \mu'$. Then $c/u \in \mu$ or $*/u\in \mu$, and so $c/u
\in \xi$ or $*/u\in \xi$. 
\item $e/u\in\mu'$, $d/u\in\sigma'$, $e\neq d$, in which case $*/u\in \xi$. 
\end{enumerate}
Since all other annotated literals are unaffected,
$C'_{u,\vec{a}}\preceq C_u$.
We never merge $\vec{p}$ literals as they have no annotations, so if
$C''_{u,\vec{a}}$ is not pruned away, then 
$C''_{u,\vec{a}}$ is derived from $C''_{v,\vec{a}}$ via merging.

In case we do not merge, there might be some $b^{\sigma'}\in
C'_{v,\vec{a}}$ with $\sigma'\preceq\sigma$, which is not removed by
merging. 
However $\sigma'\preceq\sigma\preceq \xi$, so
$C'_{u,\vec{a}}=C'_{v,\vec{a}}\preceq C_u$. As
$C''_{u,\vec{a}}=C''_{v,\vec{a}}$, this is a valid inference step (in
fact, a dummy step). 

Because we only use merging or a dummy step, $C''_{u,\vec{a}}$ is a
$\vec{q}$-clause if and only if $C''_{v,\vec{a}}$ is a
$\vec{q}$-clause, therefore the no-operation (identity) gate $g_u$ gives a valid result by induction.

\medskip\noindent
{\bf At an internal node with $\vec{p}$-resolution:} 
We do not have any annotations on $\vec{p}$-literals. So in this case
we construct $C'_u$ and $C''_u$ exactly as we would for an \lquprc proof.

\medskip\noindent
{\bf At an internal node with $\vec{q}$-resolution:} 
  When we have a resolution step between nodes $v$ and $w$ on a
  $\vec{q}$ pivot to get node $u$, we
  have $$\frac{C_v=x^{\tau\cup\xi}\lor D_v \hspace{5mm} C_w=\lnot
    x^{\tau\cup\sigma}\lor
    D_w}{C_u=\instantiate(\sigma,D_v)\cup\instantiate(\xi,D_w)}$$
  where $\domain(\tau)$, $\domain(\xi)$ and $\domain(\sigma)$ are
  mutually disjoint, and $\range(\tau) \subseteq \{0,1\}$. 

In order to do dummy instantiations we will need to define a $\{0,1\}$ version of $\xi$ and $\sigma$. So we define $\xi'=\{c/u \mid c/u \in \xi , c\in\{0,1\} \}\cup\{0/u \mid */u \in \xi \}$, $\sigma'=\{c/u \mid c/u \in \sigma , c\in\{0,1\} \}\cup\{0/u \mid */u \in \sigma \}$. This gives us the desirable property that  $\xi'\preceq\xi$, $\sigma'\preceq\sigma$. 

Now resuming the construction of $C'$, we use information from the
circuit to construct this. If $g_v(\vec{a})=1$, then we define
$C'_{u,\vec{a}}=\minst(\sigma',C'_{v,\vec{a}}, C_u )$. Otherwise, if
$g_w(\vec{a})=1$, then we define
$C'_{u,\vec{a}}=\minst(\xi',C'_{w,\vec{a}}, C_u )$. In these cases,
we know by the inductive claim that $C'_{u,\vec{a}}$ does not contain
any $\vec{q}$ literals. Therefore $C'_{u,\vec{a}}$ is the correct
instantiation (as $\xi'\preceq\xi$, $\sigma'\preceq\sigma$) of some
subset of $D_v$ or $D_w$. Hence $C'_{u,\vec{a}}\preceq C_u$. Furthermore since $g_u$ is an OR gate evaluating to 1 and since $C''_{u, \vec{a}}$, an $\vec{r}$-clause, can be obtained by an instantiation step, our inductive claim is true.

Now suppose $g_v(\vec{a})=0$ and $g_w(\vec{a})=0$. If there is no
$x^{\mu} \in C'_{v,\vec{a}}$ such that $ \mu\preceq \tau\cup\xi$, then
define $C'_{u,\vec{a}}=\minst(\sigma',C'_{v,\vec{a}}, C_u )$. Else, if
there is no $\neg x^{\mu} \in C'_{w,\vec{a}}$ such that $ \mu\preceq
\tau\cup\sigma$, then define
$C'_{u,\vec{a}}=\minst(\xi',C'_{w,\vec{a}}, C_u )$. In these cases we
know that $C'_{u, \vec{a}}$ is the correct instantiation (as
$\xi'\preceq\xi$, $\sigma'\preceq\sigma$) of some subset of $D_v$ or
$D_w$; hence $C'_{u,\vec{a}}\preceq C_u$. Furthermore, since $g_u$ is
an OR gate evaluating to 0, and since $C''_{u,\vec{a}}$, a $\vec{q}$-clause, can be obtained by an instantiation step, our inductive claim is true.

The final case is when $g_v(\vec{a})=g_w(\vec{a})=0$ and $x^{\tau \cup
  \xi_1} \in C'_{v,\vec{a}}$ for some $ \xi_1\preceq \xi $ and $\neg
x^{\tau \cup\sigma_1} \in C'_{w,\vec{a}}$ for some $ \sigma_1\preceq
\sigma $. Here, because $\domain(\tau)$, $\domain(\xi)$ and
$\domain(\sigma)$ are mutually disjoint, $\domain(\tau)$,
$\domain(\xi_1)$ and $\domain(\sigma_1)$ are also mutually
disjoint. Thus we can do the resolution step


$$\frac{C'_{v,\vec{a}}=x^{\tau\cup\xi_1}\lor D'_v \hspace{5mm} C'_{w,\vec{a}}=\lnot x^{\tau\cup\sigma_1}\lor D'_w}{\instantiate(\sigma_1 ,D'_v)\cup\instantiate(\xi_1,D'_w)}.$$


Since $\minst(\sigma_1 ,D'_v, C_u)\preceq \instantiate(\sigma,D_v)$
and $\minst(\xi_1 ,D'_w, C_u)\preceq \instantiate(\xi,D_w)$, we can
follow up $\instantiate(\sigma_1 ,D'_v)\cup\instantiate(\xi_1,D'_w)$
with sufficient merging steps to get a clause $C' \preceq C_u$; we
define this clause to be the clause $C'_{u,\vec{a}}$. 
By the inductive claim, both
$C''_{v,\vec{a}}$ and $C''_{w,\vec{a}}$ are $\vec{q}$-clauses; hence
$C''_{u,\vec{a}}$ is also a $\vec{q}$-clause and is obtained via a
valid resolution step.

\medskip\noindent
{\bf At an internal node with $\vec{r}$-resolution:} 
  When we have a resolution step between nodes $u$ and $v$ on an $\vec{r}$-literal, this is the dual of the previous case.
\end{proof}

\subsection{Monotone Interpolation}

To transfer known circuit lower bounds into size of proof bounds, we need a monotone version of the previous interpolation theorems, which we prove next.

\begin{theorem} \label{theorem2}
   \lquprc and \irmc have monotone feasible interpolation.
\end{theorem}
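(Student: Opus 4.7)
The plan is to refine the circuits constructed in the proofs of Theorems~\ref{thm:lqup} and~\ref{thm:irm}. First, I would observe that in those constructions, every gate of $C_\pi$ is already monotone in the $\vec{p}$-variables except for the selector gates $\sel(p_i,g_v,g_w)=(\neg p_i\wedge g_v)\vee(p_i\wedge g_w)$ introduced at $\vec{p}$-resolution steps: constants (at leaves), identity gates (at universal reduction, instantiation, and merging), OR gates (at $\vec{q}$-resolution), and AND gates (at $\vec{r}$-resolution) are all monotone, and the parent gates $g_v,g_w$ are monotone by induction. Thus the whole issue reduces to eliminating the negative $\vec{p}$-literal in the selectors.

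I would follow the classical trick of \Pudlak~\cite{Pud97} for resolution and exploit the positivity assumption: because $\vec{p}$ appears only positively in $A$, the literal $\neg p_i$ cannot occur in any axiom of $A$, and hence every clause of $\pi$ that contains $\neg p_i$ must trace back through at least one $B$-axiom in its derivation. In particular, at each $\vec{p}$-resolution with pivot $p_i$ and parents $v$ (with $+p_i$) and $w$ (with $-p_i$), the parent $w$ is ``witnessed $B$-side'' in the sense that its derivation necessarily uses a $B$-axiom. Using this, I would replace $\sel(p_i,g_v,g_w)$ by a monotone gate---for instance $g_v\vee(p_i\wedge g_w)$, which upper-bounds the selector while using $p_i$ only positively---and adjust the construction of $\pi'(\vec{a})$ locally at $\vec{p}$-resolution steps so that whenever the new gate outputs~$1$ through the $p_i$-disjunct, the $B$-axiom that injected $\neg p_i$ into the derivation of $w$, combined with weakening/resolution (and, for \irmc, instantiation/merging) steps, assembles the required $B(\vec{a},\vec{r})$-derivation of $C''_{u,\vec{a}}$.

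All remaining cases of the inductive arguments of Theorems~\ref{thm:lqup} and~\ref{thm:irm} transfer essentially unchanged: leaves, universal reductions, and $\vec{q}/\vec{r}$-resolutions need no modification, and for \irmc the additional instantiation and merging rules were already shown in Theorem~\ref{thm:irm} to yield identity gates, which are monotone. Consequently the full modification is local to $\vec{p}$-resolution nodes, and the resulting monotone circuit has size only a small constant factor larger than the original circuit $C_\pi$.

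The main obstacle will be re-verifying the inductive invariant $g_u(\vec{a})=1 \Rightarrow C''_{u,\vec{a}}$ is a $\vec{r}$-clause derivable from $B(\vec{a},\vec{r})$ in the borderline case where the monotone over-approximation asserts a $B$-interpretation even though the original selector would have chosen $g_v$. This is precisely the step where the positivity assumption is used essentially: the presence of $\neg p_i$ in $C_w$ guarantees the availability of a $B$-axiom that, together with the surviving ancestors in $\pi'(\vec{a})$, can be assembled into the needed $B(\vec{a},\vec{r})$-derivation. Since neither the long-distance and universal rules of \lquprc nor the instantiation and merging rules of \irmc introduce any further non-monotone gates, the same bookkeeping handles both proof systems simultaneously.
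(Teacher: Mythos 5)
Your overall plan is the same as the paper's: the only non-monotone gates in $C_\pi$ are the selectors at $\vec{p}$-resolution steps, so one replaces them by a monotone ternary connective and uses the positivity of $\vec{p}$ in $A$ to repair the one case where the new gate disagrees with $\sel$. That much is right.

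However, there is a concrete problem with the gate you pick. You propose $g_u = g_v \vee (p_i \wedge g_w)$, an \emph{upper} bound on $\sel(p_i,g_v,g_w) = (\neg p_i \wedge g_v)\vee(p_i \wedge g_w)$; the paper instead uses the \emph{lower} bound $g_u = (p_i \vee g_v)\wedge g_w$. These two choices are not symmetric, and yours is the one that does not play well with the positivity hypothesis. Concretely, your gate agrees with the selector whenever $p_i=0$, so your description of the ``borderline case'' (the gate outputs $1$ ``through the $p_i$-disjunct'' or ``although the selector would have chosen $g_v$'') does not actually identify a discrepancy. The genuine discrepancy of your gate occurs at $p_i=1$, $g_v(\vec a)=1$, $g_w(\vec a)=0$, where the selector outputs $0$ but your gate outputs $1$. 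There you must exhibit a $B(\vec a,\vec r)$-derivation of $C''_{u,\vec a}$, and the only available handle is $g_v=1$, i.e.\ the clause $C''_{v,\vec a}$; but the positivity assumption says nothing about $B$, so there is no guarantee that $p_i\notin C'_{v,\vec a}$, and since $p_i$ is set to $1$ the node $v$ may well be pruned (its clause may evaluate to $1$) while $C'_{u,\vec a}=C'_{v,\vec a}\setminus\{p_i\}$ does not, leaving $u$ alive with no surviving derivation of its clause. Your fallback --- trace the $B$-axiom that injected $\neg p_i$ into $C_w$ and ``assemble'' a $B$-derivation --- is not a local modification of $\pi'(\vec a)$ and is not substantiated; in fact at this node $g_w=0$, so $C''_{w,\vec a}$ is an $A$-derived clause and is of no use for a $B$-derivation.

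The paper's gate $(p_i\vee g_v)\wedge g_w$ sidesteps all of this: its only disagreement with the selector is at $p_i=0$, $g_v(\vec a)=1$, $g_w(\vec a)=0$, where the gate outputs $0$ and one needs an $A(\vec a,\vec q)$-derivation. One sets $C'_{u,\vec a}=C'_{w,\vec a}\setminus\{\neg p_i\}$, and \emph{here} the positivity assumption bites: since $g_w(\vec a)=0$, the clause $C''_{w,\vec a}$ is derived from $A$-clauses only, which contain $\vec p$ positively, and none of the rules of \lquprc{} or \irmc{} introduce a fresh $\vec p$-literal; hence $\neg p_i\notin C''_{w,\vec a}$ and $C''_{u,\vec a}=C''_{w,\vec a}$, so the inductive claim transfers verbatim. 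In short: the monotone replacement must err towards $0$ (towards the $A$-side), because positivity constrains $A$, not $B$. Switching your gate to $(p_i\vee g_v)\wedge g_w$ and redoing the borderline analysis as above closes the gap; everything else in your outline (constants, identity gates, OR/AND gates, and the fact that $\vec p$-literals carry no annotations so the \irmc{} case needs no extra work at $\vec p$-resolution nodes) is correct and matches the paper.
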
 
\begin{proof}
In previous subsections, we have shown that the circuit
$C_{\pi}(\vec{p})$ is a correct interpolant for the QBF 
$\mathcal{F}$. That is, if $C_{\pi}(\vec{p}) = 0$ then $\mathcal{Q}
\vec{q}. A(\vec{a}, \vec{q})$ is false, and if $C_{\pi}(\vec{p}) = 1$
then $\mathcal{Q} \vec{r}. B(\vec{a}, \vec{r})$ is false.

However, if $\vec{p}$ occurs only positively in $A(\vec{p}, \vec{q})$ then we construct a monotone circuit $C^\mon_{\pi}(\vec{p})$ such that, on every $0, 1$ assignment $\vec{a}$ to $\vec{p}$ we have
\begin{align*} 
C^\mon_{\pi}(\vec{a}) = 0 &\implies \mathcal{Q} \vec{q}. A(\vec{a}, \vec{q}) \text{ is false, and }  \\
C^\mon_{\pi}(\vec{a}) = 1 &\implies \mathcal{Q} \vec{r}. B(\vec{a}, \vec{r}) \text{ is false. }
\end{align*}

We obtain $C^\mon_{\pi}(\vec{p})$ from $C_{\pi}(\vec{p})$ by replacing all selector gates $g_u = \sel(x, g_v, g_w)$ by the following monotone ternary connective: $g_u = (x \vee g_v) \wedge g_w$ where nodes $v$ and $w$ are the parents of $u$ in $\pi$.
We also change the proof-like structure $\pi'(\vec{a})$; the
construction is the same as before except that at $\vec{p}$-resolution nodes, 
the rule for fixing $C'_{u,\vec{a}}$ is also changed to reflect the
monotone function used instead.

More precisely, the functions $\sel(x, g_v, g_w)$ and
$g_u = (x \vee g_v) \wedge g_w$ differ only when $x = 0$,
$g_v(\vec{a}) = 1$, and 
$g_w(\vec{a}) = 0$.  We set $C'_{u,\vec{a}}$ to 
$C'_{w,\vec{a}}\setminus \{\neg x\}$ if $x=1$ or if $x=0$, $g_v(\vec{a})=1$ and
$g_w(\vec{a})=0$, and to  $C'_{v,\vec{a}}\setminus \{x\}$ otherwise. 

It suffices to verify the inductive statements  in the case when
$x=0$, $g_v(\vec{a}) = 1$, and $g_w(\vec{a}) = 0$. We have to show
that $C'_{u,\vec{a}} \preceq C_u$; this holds by induction.
We also have to show that 
$C''_{u,\vec{a}}$ is a $\vec{q}$-clause, and can be derived using
$A(\vec{a}, \vec{q})$ clauses alone via the appropriate proof system.
By induction, since $g_w(\vec{a}) = 0$, we conclude that 
$C''_{w,\vec{a}}$ cannot contain $\neg x$: it can be derived
from the clauses of $A(\vec{p},\vec{q})$ alone,
by the positivity constraint, these clauses do not contain $\neg x$, and the derivation cannot introduce literals.  
Hence $C''_{w,\vec{a}}=
C'_{w,\vec{a}} \setminus \{\neg x\}|_{\vec{a}}$, which is
 $C''_{u,\vec{a}}$.
\end{proof}

\section{New Exponential Lower Bounds for \irmc and \lquprc}
\label{sec:lower-bounds}

We now apply our interpolation theorems to obtain new exponential lower bounds for a new class of QBFs. The lower bound will be directly transferred from the following monotone circuit lower bound for the problem $\clique(n,k)$, asking whether a given graph with $n$ nodes has a clique of size $k$. 
\begin{theorem}[Alon \& Boppana \cite{AB87}] \label{thm:raz}
  All monotone circuits that compute $\clique(n,n/2)$ are of exponential size. 
\end{theorem}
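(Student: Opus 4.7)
The plan is to apply Razborov's approximation method for monotone circuits, in the sharpened form due to Alon and Boppana that yields a truly exponential lower bound (rather than Razborov's original $n^{\Omega(\log n)}$). Fix $n$ and $k = n/2$; view each input to $\clique(n,k)$ as the edge set of a graph on $n$ vertices.

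I would first pin down two families of test inputs on which any correct monotone circuit must be separated. The \emph{positive} tests are the $\binom{n}{k}$ cliques $K_S$ for $S \subseteq [n]$ with $|S|=k$, on which $\clique(n,k)$ evaluates to $1$. The \emph{negative} tests are the complete $(k-1)$-partite graphs $K_c$ arising from colorings $c\colon [n]\to [k-1]$, which are $k$-clique free and so evaluate to $0$. Both families are exponentially large, and any monotone circuit computing $\clique(n,k)$ must output $1$ on every positive test and $0$ on every negative test.

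Next I would introduce a class of \emph{approximators}: monotone Boolean functions represented by antichains of vertex subsets of bounded size $\ell$, closed under a sunflower-plucking operation. Following Razborov, define approximate connectives $\sqcup$ and $\sqcap$ that mimic $\vee$ and $\wedge$, but immediately collapse any large sunflower in the resulting antichain to its core via the Erd\H{o}s--Ko--Rado sunflower lemma, so the output is again a legal approximator. The technical heart is a two-sided error bound: \textbf{(i)} each use of $\sqcup$ or $\sqcap$ in place of $\vee$ or $\wedge$ flips the output on only a small (explicitly bounded) fraction of the positive and of the negative tests; \textbf{(ii)} every single approximator either agrees with $\clique(n,k)$ on very few positive tests, or else on very few negative tests, because it is structurally too restricted to separate the two families.

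To conclude, I would traverse a hypothetical monotone circuit of size $s$ for $\clique(n,k)$ bottom-up, replacing each AND/OR gate by its approximator analogue $\sqcap$/$\sqcup$. The resulting global approximator differs from the circuit's output (hence from $\clique(n,k)$) on at most $s$ times the per-gate error on each side; but by (ii) it must disagree with $\clique(n,k)$ on exponentially many tests on one side. Solving the inequality for $s$ yields the desired exponential lower bound. The main obstacle, and precisely the point where Alon and Boppana improve on Razborov's original analysis, is the quantitative tuning of the parameters of the approximator class: one must choose the bound $\ell$ on subset size and the sunflower thresholds so that the per-gate error is exponentially small, while the structural limitation in (ii) is still strong enough to force disagreement on an exponential number of tests. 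Getting both estimates simultaneously tight is the delicate part of the argument.
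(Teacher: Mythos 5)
The paper does not prove this theorem; it imports it as a black box from Alon and Boppana \cite{AB87} and uses it in Section~4 to transfer the monotone circuit lower bound into proof-size lower bounds for the clique-no-clique QBFs. Your sketch is a reasonable high-level outline of the actual Alon--Boppana argument: positive tests given by $k$-cliques, negative tests given by complete $(k-1)$-partite graphs induced by $(k-1)$-colourings, an approximator class of bounded-size antichains closed under sunflower plucking, per-gate error bounds for the approximate connectives $\sqcup$ and $\sqcap$, and a structural lemma showing that a single approximator cannot separate the two test families. One attribution slip: the sunflower lemma you invoke is due to Erd\H{o}s and Rado, not Erd\H{o}s--Ko--Rado (the latter is a distinct result about maximum intersecting families). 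As you note yourself, the quantitative tuning of the antichain size bound $\ell$ and the sunflower thresholds is where the real work lies, and it is exactly this sharpened analysis that lets Alon and Boppana push $k$ all the way up to $n/2$ with a truly exponential bound where Razborov's original argument gave a quasipolynomial $n^{\Omega(\log n)}$ bound only for small $k$; your proposal correctly identifies that obstacle but does not carry out the estimates, so it is an outline rather than a proof. Given that the paper itself treats the theorem as an external ingredient, that level of detail is defensible, but you should be aware you have reproduced the strategy without discharging the step that distinguishes Alon--Boppana from Razborov.
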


We now build the QBF. Fix an integer $n$ (indicating the number of
vertices of the graph) and let $\vec{p}$ be the set of variables
$\{p_{uv} \mid  1\leq u<v\leq n\}$. An assignment to $\vec{p}$ picks
a set of edges, and thus an $n$-vertex graph.
Let $\vec{q}$ be the set of variables $\{q_{iu} \mid i \in [\frac{n}{2}], u \in [n] \}$.
We use the following clauses.
\[
\begin{array}{c@{\ }c@{\ }ll@{\ }c@{\ }ll}
C_i&=&q_{i1}\vee \dots \vee q_{in} & \quad \text{for } i \in [\frac{n}{2}]\\
D_{i,j,u}&=&\neg q_{iu}\vee \neg q_{ju} & \quad \text{for } i, j \in [\frac{n}{2}], i< j \text{ and }  u\in [n]\\
E_{i,u,v}&=&\neg q_{iu}\vee \neg q_{iv} & \quad \text{for } i\in [\frac{n}{2}] \text{ and } u, v\in [n], u<v\\
F_{i,j,u,v}&=&\neg q_{iu}\vee \neg q_{jv}\vee p_{uv} & \quad \text{for
} i, j \in [\frac{n}{2}], i\neq j$ \text{ and } $u, v\in [n], u < v.
\end{array}
\]
(For notational convenience, we interpret the assignment $p_{uv}=0$ to
mean that the edge $uv$ is present in the graph.)

We can now express $\clique(n,n/2)$ as a polynomial-size QBF $\exists \vec{q}. A_n(\vec{p},\vec{q})$, where 
$$
  A_n(\vec{p},\vec{q})=\bigwedge_{i \in [\frac{n}{2}]}C_i\wedge \bigwedge_{i<j,u\in [n]}D_{i,j,u}\wedge \bigwedge_{i\in [\frac{n}{2}],u<v}E_{i,u,v}\wedge \bigwedge_{i<j,u\neq v}F_{i,j,u,v}.
$$
Here the edge variables $\vec{p}$ appear positively in $A_n(\vec{p},\vec{q})$.

Likewise no-$\clique(n,n/2)$ can be written as a polynomial-size QBF $\forall \vec{r_1}\exists \vec{r_2}. B_n(\vec{p},\vec{r_1}, \vec{r_2})$. To construct this we use a polynomial-size circuit  that checks whether the nodes specified by $\vec{r_1}$ fail to form a clique in the graph given by $\vec{p}$.  We then use existential variables $\vec{r_2}$ for the gates of the circuit and can then form a CNF $B_n(\vec{p},\vec{r_1}, \vec{r_2})$ that represents the circuit computation.

Now we can form a sequence of false QBFs, stating that the graph encoded in $\vec{p}$ both has a clique of size $n/2$ (as witnessed by $\vec{q}$) and likewise does not have such a clique as expressed in the $B$ part:
$$
  \Phi_n=\exists \vec{p}\exists \vec{q}\forall \vec{r_1}\exists \vec{r_2}. A_n(\vec{p},\vec{q})\wedge B_n(\vec{p},\vec{r_1}, \vec{r_2}).
$$
This formula has the unique interpolant $\clique(n,n/2)(\vec{p})$. But
since all monotone circuits for this are of exponential size by
Theorem~\ref{thm:raz}, and since monotone circuits of size polynomial in \irmc and \lquprc proofs 
can be extracted by Theorem~\ref{theorem2}, all such proofs must be of exponential size, yielding:

\begin{theorem} \label{thm:lower-bound-clique}
   The QBFs $\Phi_n(\vec{p},\vec{q},\vec{r})$ require exponential-size proofs in \irmc and \lquprc.
\end{theorem}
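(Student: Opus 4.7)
The plan is to combine monotone feasible interpolation (Theorem~\ref{theorem2}) with the Alon--Boppana monotone circuit bound (Theorem~\ref{thm:raz}), once I have verified that every interpolant for $\Phi_n$ must in fact compute $\clique(n,n/2)$.

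First I would check that $\Phi_n$ fits the monotone interpolation setup of Section~\ref{subsec:interpolation-setting}: the shared variables are the edge variables $\vec{p}$ and they are existentially quantified in front of all other variables; the $A_n$-part uses only $\vec{p},\vec{q}$ and the $B_n$-part uses only $\vec{p},\vec{r_1},\vec{r_2}$. Inspecting the four clause families $C_i,D_{i,j,u},E_{i,u,v},F_{i,j,u,v}$ shows that each $p_{uv}$ occurs in $A_n$ only inside $F_{i,j,u,v}$, and there only as the positive literal $p_{uv}$. Hence the positivity hypothesis needed for monotone interpolation is satisfied.

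Next I would verify that any interpolation circuit $G(\vec{p})$ for $\Phi_n$ agrees with $\clique(n,n/2)$ on every $0,1$-assignment $\vec{a}$. If the graph $H$ encoded by $\vec{a}$ does contain an $(n/2)$-clique $\{v_1,\ldots,v_{n/2}\}$, then the assignment $q_{iv_i}=1$ (and all other $q_{iu}=0$) satisfies $A_n(\vec{a},\vec{q})$: the $C_i$ clauses are satisfied by $q_{iv_i}$, the $D_{i,j,u}$ and $E_{i,u,v}$ clauses are satisfied because each $i$ picks exactly one vertex and distinct $i$'s pick distinct ones, and each $F_{i,j,u,v}$ is satisfied either by a negated $q$ or because $p_{uv}$ is true (the clique condition forces the edge $v_iv_j$ to be present). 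So $\exists\vec{q}.A_n(\vec{a},\vec{q})$ is true, forcing $G(\vec{a})=1$. Conversely, if $H$ has no $(n/2)$-clique, then the $B_n$-part $\forall \vec{r_1}\exists \vec{r_2}.B_n(\vec{a},\vec{r_1},\vec{r_2})$ is true, because for every choice $\vec{r_1}$ of $n/2$ vertices some required edge is missing in $H$, and $\vec{r_2}$ simply supplies the values of the gates of the verifying circuit. Hence $G(\vec{a})=0$. Thus $G$ computes $\clique(n,n/2)$ exactly.

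Given any \irmc or \lquprc refutation $\pi$ of $\Phi_n$, Theorem~\ref{theorem2} now extracts a \emph{monotone} circuit $C^\mon_\pi$ of size polynomial in $|\pi|$ which interpolates $\Phi_n$, and by the previous paragraph $C^\mon_\pi$ is a monotone circuit for $\clique(n,n/2)$. Theorem~\ref{thm:raz} then yields $|C^\mon_\pi|=2^{\Omega(n^{c})}$ for a positive constant $c$, and therefore $|\pi|=2^{\Omega(n^{c})}$, which is the desired exponential lower bound.

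The only non-bookkeeping step is the semantic verification that every interpolant equals $\clique(n,n/2)$; the subtle part is checking that the $\exists\vec{q}$-witness constructed from a clique actually satisfies the $F_{i,j,u,v}$ clauses, which uses the convention that $p_{uv}=1$ encodes the \emph{presence} of the edge $uv$ together with the fact that the witness assigns each index $i$ to a distinct clique vertex. Once this is in place the theorem follows immediately by chaining Theorem~\ref{theorem2} and Theorem~\ref{thm:raz}.
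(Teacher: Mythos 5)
Your proposal is correct and takes the same route as the paper: verify the monotone-interpolation hypotheses for $\Phi_n$, observe that any interpolant computes $\clique(n,n/2)$, and chain Theorem~\ref{theorem2} with Theorem~\ref{thm:raz} (the paper compresses the semantic check into the one-line statement that the interpolant is unique). The only immaterial discrepancy is that you treat $p_{uv}=1$ as edge present whereas the paper's parenthetical note says $p_{uv}=0$; your reading is the one under which $\exists\vec{q}.A_n$ literally computes $\clique(n,n/2)$ and Alon--Boppana applies directly, and the conclusion is unaffected by the choice of convention.
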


Note: A slightly different, and arguably more transparent, way of
encoding no-$\clique(n,n/2)$ is described in \cite{BCMS-FST16}. 

\section{Feasible Interpolation vs.\ Strategy Extraction}
\label{sec:strat-extraction}

Recall the two player game semantics of a QBF explained in Section~\ref{sec:prelim}. Every false QBF has a
winning strategy for the universal player, where the strategy value for each
variable depends only on the values of the variables played before. We
now explain 
the relation between strategy extraction --- one of the main paradigms
for QBF systems --- and feasible interpolation.
In Section~\ref{sec:interpolation} we studied QBFs of the form
$\mathcal{F}= \exists \vec{p} \mathcal{Q} \vec{q} \mathcal{Q} \vec{r}. \left[A(\vec{p}, \vec{q}) \wedge B(\vec{p}, \vec{r})\right].$ If we  add a common universal variable $b$ we can change it to an equivalent QBF
$$
  \mathcal{F}^b= \exists \vec{p}\, \forall b\,  \mathcal{Q} \vec{q}\, \mathcal{Q} \vec{r}. \left[ (A(\vec{p}, \vec{q})\vee  b) \wedge  (B(\vec{p}, \vec{r})\vee \neg b) \right].
  $$
  This can be expressed with a CNF matrix by inserting the literal $b$
  into each clause of $A(\vec{p}, \vec{q})$ and the literal $\neg b$
  into each clause of $B(\vec{p}, \vec{r})$. Let  $\mathcal{F}^b$ also
  denote this equivalent QBF. 
  
If $\mathcal{F}$ is false, then also $\mathcal{F}^b$ is false and thus the universal player has a winning strategy, including a strategy for $b=\sigma(\vec{p})$ for the common universal variable $b$.

\begin{remark}
Every winning strategy $\sigma(\vec{p})$ for $b$ is an interpolant for $\mathcal{F}$, i.e., for every $0, 1$ assignment $\vec{a}$ of $\vec{p}$ we have
\begin{align*} 
\sigma(\vec{a}) = 0 &\implies \mathcal{Q} \vec{q}. A(\vec{a}, \vec{q}) \text{ is false, and }\\  
\sigma(\vec{a}) = 1 &\implies \mathcal{Q} \vec{r}. B(\vec{a}, \vec{r}) \text{ is false. }
\end{align*}
\end{remark}

\begin{proof}
  Suppose not. Then there are two possibilities. 
  \begin{itemize} 
\item There is some $\vec{a}$ where $\sigma(\vec{a})=0$ and $\mathcal{Q} \vec{q}. A(\vec{a}, \vec{q})$ is true. Then setting $b=0$ would satisfy $\mathcal{Q} \vec{r}. B(\vec{p}, \vec{r})\vee \neg b$. But $\mathcal{Q} \vec{q}.A(\vec{a}, \vec{q}) \vee b$ is also satisfied. Hence this cannot be part of the winning strategy for the universal player.
\item There is some $\vec{a}$ where $\sigma(\vec{a})=1$ and
  $\mathcal{Q} \vec{r}. B(\vec{a}, \vec{r})$ is true. This is the dual
  of the above.  
  \qedhere
\end{itemize}
\end{proof}

This observation means that every interpolation problem can be reformulated as a strategy extraction problem. We will now show that from proofs of these reformulated interpolation problems we can extract a (monotone) Boolean circuit for the winning strategy on the new universal variable $b$.

Strategy extraction was recently shown to be a powerful lower bound technique for QBF resolution systems. 
In strategy extraction, from a refutation of a false QBF, winning strategies for the universal player for all universal variables can be efficiently extracted. Devising QBFs that require computationally hard strategies then leads to lower bounds for QBF proof systems.  
This technique applies both to \qrc \cite{BCJ15}, where \AC{0} lower bounds for e.g.\ parity are used, as well as to much stronger QBF Frege systems where the full spectrum of current (and conjectured) lower circuit bounds is employed \cite{BBC16}. In fact, Beyersdorff and Pich \cite{BP16} show that lower bounds for QBF Frege systems can only come either (a) from  circuit lower bounds via the strategy extraction technique or (b) from lower bounds for classical proposition Frege. This picture is reconfirmed here as well: QBF resolution lower bounds via feasible interpolation fall under paradigm (a) as they are in fact lower bounds via strategy extraction.

To show this we now prove how to extract strategies for interpolation problems, first for \lquprc and then for \irmc.

\begin{theorem} \label{thm:b-lqu} \hfill
\begin{enumerate}
\item From each \lquprc refutation $\pi$ of $\mathcal{F}^b$ we can extract in polynomial time a boolean circuit for $\sigma(\vec{p})$, i.e., the part of the winning strategy for variable $b$.
\item If in the same setting as above for $\mathcal{F}^b$, the variables $\vec{p}$ appear only positively in $A(\vec{p}, \vec{q})$, then we can extract a monotone boolean circuit for $\sigma(\vec{p})$ from a \lquprc refutation $\pi$ of $\mathcal{F}^b$ in polynomial time (in the size of $\pi$).
\end{enumerate}
\end{theorem}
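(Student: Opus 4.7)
The plan is to extract $\sigma(\vec{p})$ by re-running the feasible interpolation construction of Theorem~\ref{thm:lqup} (and its monotone refinement, Theorem~\ref{theorem2}) directly on the \lquprc refutation $\pi$ of $\mathcal{F}^b$, treating $\mathcal{F}^b$ as a slightly modified instance of the interpolation setting in Section~\ref{subsec:interpolation-setting} with one extra, universally quantified, common variable $b$. By the preceding Remark, any interpolant we produce is simultaneously a strategy for $b$, so it suffices to generate an interpolant-style circuit that remains correct in the presence of $b$.

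The gate assignment parallels that in the proof of Theorem~\ref{thm:lqup}. To each node $u$ of $\pi$ we attach a gate $g_u$ on inputs $\vec{p}$: axioms $A_i\vee b$ become the constant $0$ and axioms $B_j\vee\neg b$ become the constant $1$; a resolution step on a $\vec{p}$-pivot $x$ becomes a selector $\sel(x,g_v,g_w)$ for part~(1) and the ternary monotone connective $(x\vee g_v)\wedge g_w$ for part~(2); a resolution step on a $\vec{q}$-pivot becomes an $\vee$-gate, on an $\vec{r}$-pivot an $\wedge$-gate, and every $\forall$-reduction (including on $b$ or $b^*$) becomes an identity gate. The genuinely new cases are the resolution steps where the pivot is $b$ itself (allowed in \lquprc by L$\forall$R) and the long-distance resolution steps that fuse $b$ and $\neg b$ into $b^*$; both of these I assign an $\vee$-gate, consistent with the semantic role of $b$ as an A-side literal combining with its B-side counterpart.

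I would then prove, by induction on the height of $u$ in $\pi$, the analogue of the three inductive invariants of Theorem~\ref{thm:lqup}, strengthened so as to track the $b$-literal: (i) $C'_{u,\vec{a}}\preceq C_u$; (ii) if $g_u(\vec{a})=0$ then $C''_{u,\vec{a}}$ contains neither $\neg b$ nor $b^*$ and is derivable in \lquprc from the A-side axioms $A(\vec{a},\vec{q})\vee b$ alone; (iii) if $g_u(\vec{a})=1$ then $C''_{u,\vec{a}}$ contains neither $b$ nor $b^*$ and is derivable from $B(\vec{a},\vec{r})\vee\neg b$ alone. At the root $r$, where $C_r=\Box$, invariant~(ii) gives an \lquprc refutation of $\forall b\,\mathcal{Q}\vec{q}.\,A(\vec{a},\vec{q})\vee b$, which by soundness forces $\mathcal{Q}\vec{q}.\,A(\vec{a},\vec{q})$ to be false; invariant~(iii) dually forces $\mathcal{Q}\vec{r}.\,B(\vec{a},\vec{r})$ to be false. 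Hence $g_r(\vec{p})$ is a winning strategy for $b$, by construction of size polynomial in $|\pi|$.

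The main obstacle will be the handling of $b$ at steps that mix the two sides --- in particular, L$\exists$R/L$\forall$R steps that create $b^*$ from $b$ and $\neg b$, and L$\forall$R steps that use $b$ as pivot. For each, one has to check carefully that the strengthened invariants~(ii) and~(iii) force the two parent-gates into a compatible pair and that the $\vee$-gate returns the correct value; the remaining cases transcribe directly from the proof of Theorem~\ref{thm:lqup}. Part~(2) then follows as in Theorem~\ref{theorem2}: positivity of $\vec{p}$ in $A(\vec{p},\vec{q})$ is unaffected by adjoining the literal $b$, so the sole corner case where the monotone connective and the selector disagree --- $x=0$, $g_v(\vec{a})=1$, $g_w(\vec{a})=0$ --- is resolved by the same observation that $\neg x$ cannot appear in any clause derivable from $A(\vec{a},\vec{q})\vee b$ alone, and the resulting circuit for $\sigma$ is monotone.
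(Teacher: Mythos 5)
Your overall framing (re-run the Theorem~\ref{thm:lqup} machinery on $\mathcal{F}^b$ and read off a strategy for $b$ via the Remark) matches the paper, but the gate you assign to a resolution step with pivot $b$ is wrong, and this is precisely the one place where the argument changes. Consider a step deriving $C_u=C_1\vee C_2\vee U$ from $C_v=C_1\vee U_1\vee b$ and $C_w=C_2\vee U_2\vee\neg b$, and an assignment $\vec{a}$ with $g_v(\vec{a})=0$, $g_w(\vec{a})=1$. Your $\vee$-gate sets $g_u(\vec{a})=1$, so by your invariant~(iii) you must produce an $\vec{r}$-clause $C'_{u,\vec{a}}$ derivable from the $B$-side with $C'_{u,\vec{a}}\preceq C_u$. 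The only candidate supplied by induction is $C'_{w,\vec{a}}$, which in general still contains $\neg b$; but $C_u$ contains neither $\neg b$ nor $b^*$ (the pivot $b$ is resolved away, and $b$ cannot appear merged in $U$ since $\ind(b)$ is minimal among non-$\vec{p}$ variables), so $C'_{w,\vec{a}}\not\preceq C_u$. You also cannot drop $\neg b$ by $\forall$-reduction, because $b$ is quantified before all of $\vec{q},\vec{r}$ and the clause will typically still contain existential $\vec{r}$-literals. So your invariant~(i), kept in its original unrelaxed form $C'_{u,\vec{a}}\preceq C_u$, cannot be maintained across $b$-pivot steps under the $\vee$-gate.

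The paper's fix is two-fold and both parts are essential: first, the invariant is weakened to $C'_{u,\vec{a}}\setminus\{b,\neg b\}\preceq C_u$, allowing $C'_{u,\vec{a}}$ to carry a stray $b$ or $\neg b$ that $C_u$ has lost; second, at a $b$-pivot step the gate $g_u$ is not an $\vee$-gate but simply the identity on one parent (say $g_u=g_v$), discarding $g_w$, which lets $C'_{u,\vec{a}}=C'_{v,\vec{a}}$ without needing to reconcile the two sides. At the root $r$, $C''_{r,\vec{a}}$ is then not necessarily $\Box$ but contains at most one of $b,\neg b$, which is removed by a final $\forall$-reduction. A smaller point: you list ``long-distance steps that fuse $b$ and $\neg b$ into $b^*$'' as a separate new case requiring a new gate, but as the paper observes these are necessarily $\vec{p}$-pivot steps (since $\ind(x)<\ind(b)$ forces $x\in\vec{p}$), so they are already covered by the selector/monotone connective, and moreover $b^*$ never arises in $\pi''$ because $\vec{p}$-pivots are instantiated away. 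Once the gate and invariant are corrected, the rest of your outline, including part~(2), goes through as in Theorem~\ref{theorem2}.
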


\begin{proof}
As we can compute the (monotone) interpolant when $b$ is absent, we use the same proof with a few modifications for the new formula.

We first change the definition of $\vec{q}$ and $\vec{r}$-clauses to allow for $b$ and $\neg b$ literals.
\begin{definition}
We call any clause in the proof a $\vec{q}$-clause (resp.\ $\vec{r}$-clause) if it contains only variables $\vec{p}, \vec{q}$ or literal $b$ (resp. $\vec{p}, \vec{r}$ or literal $\neg b$). We retain the inheritance property for clauses only containing $\vec{p}$ variables.
\end{definition} 

{\bf Construction of the circuit $C_{\pi}$:}
When constructing the circuit, we now also need to consider a  resolution
step on the common universal variable $b$:
$$
\frac{C_v = \overbrace{C_1 \vee U_1 \vee b}^\text{node $v$} \hspace{7mm} \overbrace{C_2 \vee U_2 \vee \neg b}^\text{node $w$} = C_w}{C_u = \underbrace{C_1 \vee C_2 \vee U}_\text{node $u$}}. 
$$
Here we can arbitrarily pick one of $v$ or $w$. For example here we
pick $v$ and let $g_u$ be wired to $g_v$ with the no-operation
(identity) gate, disregarding the input from $g_w$.

{\bf Construction of $\pi'$ and $\pi''$:} 
We slightly modify the invariants to include the new definitions.  Additionally we make a small change to the first invariant.

\begin{enumerate}
	\item $C'_{u,\vec{a}}\backslash\{b, \neg b\} \preceq C_u$.

	\item $g_{u}(\vec{a}) = 0 \implies C''_{u,\vec{a}}$ is a $\vec{q}$-clause and can be obtained from the clauses of 
			$A(\vec{a},\vec{q})$ alone using the rules of  \lquprc.

	\item $g_{u}(\vec{a}) = 1 \implies C''_{u,\vec{a}}$ is a $\vec{r}$-clause and can be obtained from the clauses of
			$B(\vec{a},\vec{r})$ alone using the rules of  \lquprc.
\end{enumerate}

Notice also that $b^* \notin C''_{u,\vec{a}}$ as $b^*$ can only arise
from a long distance resolution step on a $\vec{p}$ variable but these
are instantiated and so never occur as pivots in the proof $\pi''$ assuming the induction hypothesis.

We observe that the base cases work for the construction of $\pi'$ and $\pi''$. The only new part of the inductive step is when we have
$$\frac{C_v = \overbrace{C_1 \vee U_1 \vee b}^\text{node $v$} \hspace{7mm} \overbrace{C_2 \vee U_2 \vee \neg b}^\text{node $w$} = C_w}{C_u = \underbrace{C_1 \vee C_2 \vee U}_\text{node $u$}}. 
$$
To find $C'_{u,\vec{a}}$ we look at our choice of wiring in the circuit construction. If $g_u$ is wired to $g_v$ ($g_u=g_v$) then we take $C'_{u,\vec{a}}$ to equal $C'_{v,\vec{a}}$. Since $C'_{v,\vec{a}}\backslash\{b, \neg b\}\preceq C_v\backslash\{b, \neg b\}\preceq C_u$ we get $C'_{u,\vec{a}}\backslash\{b, \neg b\}\preceq C_u$.
Since our choice of the clause is determined by our choice of wiring, then we retain our invariants in that way.

Notice that we never resolve a $\vec{q}$-clause with a $\vec{r}$ clause in $\pi''$ so $b, \neg b$ will always be retained in their respective type of clauses.

From the above, we have the following conclusion. Let $r$ be the root of $\pi$. Then on any assignment $\vec{a}$ to the $\vec{p}$ variables we have:

\begin{description}
	\item[(1)] $C'_{r,\vec{a}} \backslash \{b, \neg b\} \preceq
          C_r = \Box$. Therefore, $C''_{r,\vec{a}}\setminus \{b, \neg
          b\} = \Box$. But $C''_{r,\vec{a}}$ can contain at most one
          of these literals, which can be universally reduced to
          complete a refutation. 

	\item[(2)] $g_{r}(\vec{a}) = 0 \implies  C''_{r,\vec{a}}$ is a $\vec{q}$-clause and can be obtained from the clauses 			of $A(\vec{a},\vec{q})$ alone using the rules of system \lquprc. Hence by soundness of \lquprc, 
		$\mathcal{Q} \vec{q}. A(\vec{a}, \vec{q})$ is false.

	\item[(3)] $g_{r}(\vec{a}) = 1 \implies  C''_{r,\vec{a}}$ is an $\vec{r}$-clause and can be obtained from the clauses 			of $B(\vec{a},\vec{r})$ alone using the rules of system \lquprc. Hence by soundness of \lquprc,
		$\mathcal{Q} \vec{r}. B(\vec{a}, \vec{r})$ is false.
\end{description}
\noindent
Thus $g_r$, the output gate of the circuit, computes $\sigma(\vec{p})$. 
\end{proof}

An analogous result to Theorem~\ref{thm:b-lqu} also holds for \irmc.

\begin{theorem} \label{thm:b-irmc} \hfill
\begin{enumerate}
\item From each \irmc refutation $\pi$ of $\mathcal{F}^b$ we can extract in polynomial time a boolean circuit for $\sigma(\vec{p})$, i.e., the part of the winning strategy for variable $b$.
\item If in the same setting as above for $\mathcal{F}^b$, the variables $\vec{p}$ appear only positively in $A(\vec{p}, \vec{q})$, then we can extract a monotone boolean circuit for $\sigma(\vec{p})$ from a \irmc refutation $\pi$ of $\mathcal{F}^b$ in polynomial time (in the size of $\pi$).
\end{enumerate}
\end{theorem}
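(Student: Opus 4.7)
The plan is to combine the \irmc interpolation machinery of Theorem~\ref{thm:irm} with the $b$-variable handling of Theorem~\ref{thm:b-lqu}. Given an \irmc refutation $\pi$ of $\mathcal{F}^b$, I will construct a circuit $C_\pi$ and auxiliary proof-like structures $\pi'(\vec{a})$ and $\pi''(\vec{a})$, then verify the usual three-invariant induction, and finally read off the strategy for $b$ from the output gate of $C_\pi$.

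The first adjustment is to redefine $\vec{q}$-clauses and $\vec{r}$-clauses in this setting. Because $b$ is universal it never appears as a literal in \irmc clauses; it enters only through annotations. Every axiom coming from $A \vee b$ attaches $0/b$ to its $\vec{q}$-literals, while every axiom from $B \vee \neg b$ attaches $1/b$ to its $\vec{r}$-literals. Moreover, since $\vec{p}$ precedes $b$ in the prefix, $\vec{p}$-literals never carry $b$-annotations. I therefore declare a $\vec{q}$-clause to be any clause whose literals use only variables from $\vec{p} \cup \vec{q}$ with any $b$-annotation equal to $0/b$ or $*/b$, and dually an $\vec{r}$-clause with $1/b$ or $*/b$; the inheritance convention for clauses containing only $\vec{p}$-literals is retained.

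The circuit $C_\pi$ is built exactly as in Theorem~\ref{thm:irm}: leaves become $0$ or $1$ constants according to which side their axiom came from, $\vec{p}$-resolution nodes get a selector gate, $\vec{q}$- and $\vec{r}$-resolution nodes get OR and AND gates respectively, and instantiation/merging nodes become identity gates. Since \irmc never resolves on $b$ (as $b$ is universal and cannot be a pivot), no new gate type needs to be introduced, making this step actually simpler than its \lquprc analogue in Theorem~\ref{thm:b-lqu}. The structures $\pi'$ and $\pi''$ are then built case by case following Theorem~\ref{thm:irm}, with the $\preceq$-invariant naturally weakened so that it ignores the $b$-annotations inherited from axioms, in the same spirit as the ``$\setminus\{b,\neg b\}$'' adjustment of Theorem~\ref{thm:b-lqu}. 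At the root, $C_r=\Box$ forces $C'_r=\Box$ and hence $C''_r=\Box$, and soundness then yields exactly the two conditions making $g_r(\vec{a})$ a valid strategy for $b$.

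The main obstacle I expect is the bookkeeping of $b$-annotations under merging and instantiation: if an intermediate clause in $\pi$ carries literals annotated $0/b$ and $1/b$ on the same base variable, a merging step produces $*/b$, and I must verify that the corresponding clause in $\pi''(\vec{a})$ still lies in the correct one-sided class and remains derivable by \irmc rules from $A(\vec{a},\vec{q})$ or $B(\vec{a},\vec{r})$ alone, which is precisely what motivates admitting $*/b$ in the relaxed definitions above. Once part~(1) is in place, the monotonisation for part~(2) is imported verbatim from Theorem~\ref{theorem2}: every $\vec{p}$-resolution selector gate $\sel(x,g_v,g_w)$ is replaced by the monotone ternary connective $(x \vee g_v) \wedge g_w$, and positivity of $\vec{p}$ in $A(\vec{p},\vec{q})$ ensures that $\neg p$ can never appear in an $A$-derived clause, so the invariants still hold at the single modified case.
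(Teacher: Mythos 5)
Your proposal is correct and follows essentially the paper's route: the paper disposes of this theorem in one line by observing that the \irmc construction of Theorem~\ref{thm:irm} applies verbatim because $b$, being universal, never appears as a literal or pivot in \irmc clauses but only as an annotation, so ``the $b$ literals do not affect the argument''---which is exactly the key observation you identify (no new gate type, unlike in the \lquprc case). Your extra precautions about $*/b$ arising under merging and the weakened $\preceq$-invariant are in fact unnecessary---since $\vec{q}$-literals only ever carry $0/b$ and $\vec{r}$-literals only $1/b$ and the two never share a base variable, $*/b$ can never be created---but they introduce no error and the proof remains sound.
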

\begin{proof}
We can use exactly the same constructions as in Theorem~\ref{thm:irm}. The $b$ literals do not affect the argument.
%
%
%
%
%
%
\end{proof}

As a corollary, the versions $\Phi^b_n(\vec{p},\vec{q},\vec{r})$ of the formulas from Section~\ref{sec:lower-bounds} also require exponential-size proofs in \irmc and \lquprc.

\section*{Acknowledgements}
We thank Pavel \Pudlak and Mikol\'a\v{s} Janota for helpful
discussions on the relation between feasible interpolation and
strategy extraction during the Dagstuhl Seminar `Optimal
Agorithms and Proofs' (14421).


\begin{thebibliography}{BDG{\etalchar{+}}04}

\bibitem[AB87]{AB87}
Noga Alon and Ravi~B. Boppana.
\newblock The monotone circuit complexity of boolean functions.
\newblock {\em Combinatorica}, 7(1):1--22, 1987.

\bibitem[AB09]{AroraBarak09}
Sanjeev Arora and Boaz Barak.
\newblock {\em Computational Complexity -- A Modern Approach}.
\newblock Cambridge University Press, 2009.

\bibitem[BBC16]{BBC16}
Olaf Beyersdorff, Ilario Bonacina, and Leroy Chew.
\newblock Lower bounds: From circuits to {QBF} proof systems.
\newblock In {\em Proc.\ ACM Conference on Innovations in Theoretical Computer
  Science (ITCS'16)}, pages 249--260. ACM, 2016.

\bibitem[BCJ14]{BCJ14}
Olaf Beyersdorff, Leroy Chew, and Mikol{\'a}\v{s} Janota.
\newblock On unification of {QBF} resolution-based calculi.
\newblock In {\em MFCS, II}, pages 81--93, 2014.

\bibitem[BCJ15]{BCJ15}
Olaf Beyersdorff, Leroy Chew, and Mikol{\'a}\v{s} Janota.
\newblock Proof complexity of resolution-based {QBF} calculi.
\newblock In {\em STACS}, pages 76--89, 2015.

\bibitem[BCMS15]{BCMS15}
Olaf Beyersdorff, Leroy Chew, Meena Mahajan, and Anil Shukla.
\newblock Feasible interpolation for {QBF} resolution calculi.
\newblock In {\em Proc.\ International Colloquium on Automata, Languages, and
  Programming (ICALP'15)}, pages 180--192. Springer, 2015.

\bibitem[BCMS16a]{BCMS16}
Olaf Beyersdorff, Leroy Chew, Meena Mahajan, and Anil Shukla.
\newblock Are short proofs narrow? {QBF} resolution is not simple.
\newblock In {\em Proc.\ Symposium on Theoretical Aspects of Computer Science
  (STACS'16)}, pages 15:1--15:14, 2016.

\bibitem[BCMS16b]{BCMS-FST16}
Olaf Beyersdorff, Leroy Chew, Meena Mahajan, and Anil Shukla.
\newblock Understanding Cutting Planes for QBFs.
\newblock In {\em Proc.\ Foundations of Software Technology and
  Theoretical Computer Science   (FSTTCS'16)}, pages 40:1--40:15,
2016.
\newblock (Full version in ECCC TR 17-037.) 

\bibitem[BCS15]{BCS15}
Olaf Beyersdorff, Leroy Chew, and Karteek Sreenivasaiah.
\newblock A game characterisation of tree-like {Q}-resolution size.
\newblock In {\em LATA}, pages 486--498, 2015.

\bibitem[BDG{\etalchar{+}}04]{BDGMP04}
Maria~Luisa Bonet, Carlos Domingo, Ricard Gavald{\`a}, Alexis Maciel, and
  Toniann Pitassi.
\newblock Non-automatizability of bounded-depth {Frege} proofs.
\newblock {\em Computational Complexity}, 13(1--2):47--68, 2004.

\bibitem[BJ12]{DBLP:journals/fmsd/BalabanovJ12}
Valeriy Balabanov and Jie-Hong~R. Jiang.
\newblock Unified {QBF} certification and its applications.
\newblock {\em Formal Methods in System Design}, 41(1):45--65, 2012.

\bibitem[BK14]{BK14}
Olaf Beyersdorff and Oliver Kullmann.
\newblock Unified characterisations of resolution hardness measures.
\newblock In {\em SAT}, pages 170--187, 2014.

\bibitem[BM08]{BM08a}
Marco Benedetti and Hratch Mangassarian.
\newblock {QBF}-based formal verification: Experience and perspectives.
\newblock {\em JSAT}, 5(1-4):133--191, 2008.

\bibitem[BP16]{BP16}
Olaf Beyersdorff and J{\'{a}}n Pich.
\newblock Understanding {Gentzen} and {Frege} systems for {QBF}.
\newblock In {\em Proc.\ ACM/IEEE Symposium on Logic in Computer Science
  (LICS)}, 2016.

\bibitem[BPR00]{BPR00}
Maria~Luisa Bonet, Toniann Pitassi, and Ran Raz.
\newblock On interpolation and automatization for {Frege} systems.
\newblock {\em SIAM Journal on Computing}, 29(6):1939--1967, 2000.

\bibitem[BSW01]{BW01}
Eli Ben-Sasson and Avi Wigderson.
\newblock Short proofs are narrow - resolution made simple.
\newblock {\em Journal of the ACM}, 48(2):149--169, 2001.

\bibitem[BWJ14]{BWJ14}
Valeriy Balabanov, Magdalena Widl, and Jie-Hong~R. Jiang.
\newblock {QBF} resolution systems and their proof complexities.
\newblock In {\em SAT}, pages 154--169, 2014.

\bibitem[Cra57]{Cra57}
William Craig.
\newblock Three uses of the {Herbrand-Gentzen} theorem in relating model theory
  and proof theory.
\newblock {\em The Journal of Symbolic Logic}, 22(3):269--285, 1957.

\bibitem[EKLP14]{EglyKLP14}
Uwe Egly, Martin Kronegger, Florian Lonsing, and Andreas Pfandler.
\newblock Conformant planning as a case study of incremental {QBF} solving.
\newblock In {\em Artificial Intelligence and Symbolic Computation (AISC'14)},
  pages 120--131, 2014.

\bibitem[GVB11]{Goultiaeva-ijcai11}
Alexandra Goultiaeva, Allen {Van Gelder}, and Fahiem Bacchus.
\newblock A uniform approach for generating proofs and strategies for both true
  and false {QBF} formulas.
\newblock In {\em IJCAI}, pages 546--553, 2011.

\bibitem[Hru09]{Hru09}
Pavel Hrube{\v{s}}.
\newblock On lengths of proofs in non-classical logics.
\newblock {\em Annals of Pure and Applied Logic}, 157(2--3):194--205, 2009.

\bibitem[JM15]{JM15}
Mikol{\'{a}}s Janota and Joao Marques{-}Silva.
\newblock Expansion-based {QBF} solving versus {Q}-resolution.
\newblock {\em Theor. Comput. Sci.}, 577:25--42, 2015.

\bibitem[KKF95]{DBLP:journals/iandc/BuningKF95}
Hans {Kleine B{\"u}ning}, Marek Karpinski, and Andreas Fl{\"o}gel.
\newblock Resolution for quantified {Boolean} formulas.
\newblock {\em Inf. Comput.}, 117(1):12--18, 1995.

\bibitem[KP98]{KP98}
Jan Kraj\'{\i}\v{c}ek and Pavel Pudl\'{a}k.
\newblock Some consequences of cryptographical conjectures for {$S^1_2$} and
  {$EF$}.
\newblock {\em Information and Computation}, 140(1):82--94, 1998.

\bibitem[Kra97]{Kra97}
Jan Kraj\'{\i}\v{c}ek.
\newblock Interpolation theorems, lower bounds for proof systems and
  independence results for bounded arithmetic.
\newblock {\em J.\ Symb.\ Log.}, 62(2):457--486, 1997.

\bibitem[Kra11]{Kra11}
Jan Kraj\'{\i}\v{c}ek.
\newblock {\em Forcing with random variables and proof complexity}, volume 382
  of {\em Lecture Note Series}.
\newblock London Mathematical Society, 2011.

\bibitem[Mun84]{Mun84}
Daniele Mundici.
\newblock Tautologies with a unique {Craig} interpolant, uniform vs. nonuniform
  complexity.
\newblock {\em Annals of Pure and Applied Logic}, 27:265--273, 1984.

\bibitem[Pap94]{Pap94}
Christos~H. Papadimitriou.
\newblock {\em Computational Complexity}.
\newblock Addison-Wesley, 1994.

\bibitem[Pud97]{Pud97}
Pavel Pudl\'{a}k.
\newblock Lower bounds for resolution and cutting planes proofs and monotone
  computations.
\newblock {\em The Journal of Symbolic Logic}, 62(3):981--998, 1997.

\bibitem[Pud00]{Pud00}
Pavel Pudl\'{a}k.
\newblock Proofs as games.
\newblock {\em American Math. Monthly}, pages 541--550, 2000.

\bibitem[Rin07]{Rin07}
Jussi Rintanen.
\newblock Asymptotically optimal encodings of conformant planning in {QBF}.
\newblock In {\em AAAI}, pages 1045--1050. AAAI Press, 2007.

\bibitem[VG12]{Gelder12}
Allen Van~Gelder.
\newblock Contributions to the theory of practical quantified {Boolean} formula
  solving.
\newblock In {\em CP}, pages 647--663, 2012.

\bibitem[ZM02]{DBLP:conf/iccad/ZhangM02}
Lintao Zhang and Sharad Malik.
\newblock Conflict driven learning in a quantified {Boolean} satisfiability
  solver.
\newblock In {\em ICCAD}, pages 442--449, 2002.

\end{thebibliography}

\newcommand{\etalchar}[1]{$^{#1}$}

\end{document}